\newtheorem*{remark}{Remark}
\newtheorem{theorem}{Theorem}
\newtheorem*{definition}{Definition}
\begin{document}

\pagestyle{empty}

\title{SWIPT using Hybrid ARQ over Time Varying Channels}

\author{Mehdi Salehi Heydar Abad, Ozgur Ercetin, Tamer ElBatt, Mohammed Nafie %
\thanks{Mehdi~Salehi~Heydar~Abad and Ozgur~Ercetin are with the Faculty of Engineering and Natural Sciences, Sabanci University, Istanbul, Turkey. Email: mehdis@sabanciuniv.edu, oercetin@sabanciuniv.edu}
\thanks{Tamer~ElBatt is with the Computer Science and Engineering Dept., The American 
University in Cairo, AUC Avenue, New Cairo 11835, Egypt and also Electronics and Communications Engineering Dept., 
Faculty of Engineering, Cairo University, Giza 12613, Egypt. Email:telbatt@ieee.org,}
\thanks{Mohammed~Nafie is with the Department of Electronics and Communications, Cairo University, and also Wireless Intelligent Networks Center, Nile University. Email:mnafie@ieee.org,}
\thanks{This work was in part supported by EC H2020-MSCA-RISE-2015 programme under grant number 690893, and also it was partially supported by a grant from the Egyptian National Telecommunications Regulatory Authority }
}
\maketitle
\newtheorem{lemma}{Lemma}
\newtheorem{corollary}{Corollary}
\thispagestyle{empty}

\begin{abstract}
We consider a class of wireless powered devices employing Hybrid Automatic Repeat reQuest (HARQ) to ensure reliable end-to-end communications over a two-state time-varying channel. A receiver, with no power source, relies on the energy transferred by a Simultaneous Wireless Information and Power Transfer (SWIPT) enabled transmitter to \emph{receive} and \emph{decode} information.  Under the two-state channel model, information is received at two different rates while it is only possible to harvest energy in one of the states.  The receiver aims to decode its messages with minimum expected number of re-transmissions. Dynamic and continuous nature of the problem motivated us to use a novel Markovian framework to bypass the complexities plaguing the conventional approaches such as MDP. Using the theory of absorbing Markov chains, we show that there exists an optimal policy utilizing the incoming RF signal solely to harvest energy or to accumulate mutual information. Hence, we convert the original problem with continuous action and state space into an equivalent one with discrete state and action space. For independent and identically distributed channels, we prove the optimality of a simple-to-implement {\em harvest-first-store-later} type policy. However, for time-correlated channels, we demonstrate that statistical knowledge of the channel may significantly improve the performance over such policies.
\end{abstract}

\section{Introduction}
\subsection{Background and Motivation}
 In traditional networks, wireless nodes are powered by limited capacity batteries which should be regularly charged or replaced. Energy harvesting has been recognized as a promising solution to replenish batteries without using any physical connections for charging.   In simultaneous wireless information and power transfer (SWIPT), the incoming RF signal is used for both energy harvesting and decoding of information bits. The inherent challenge of energy harvesting (EH) is the stochastic nature of the EH process, which dictates the amount and availability of harvested energy that is beyond the control of system designers. However, SWIPT may provide the network administrators a leverage on replenishing the remote devices for proper network operations.
 
In the seminal paper \cite{varshney2008transporting},  the rates at which energy and reliable information can be transferred over a single point-to-point noisy link  were characterized. This result was later extended to frequency-selective channels with additive white Gaussian noise (AWGN) in \cite{grover2010shannon}. In \cite{zhang2013mimo}, the authors examined separated and co-located information and energy receiver architectures in a multiple-input multiple-output (MIMO) wireless broadcast system. In separated architecture, both receivers have separate antennas, whereas in co-located architecture a single antenna is shared by both. In general, EH devices have small footprints necessitating a co-located architecture.  This arises a resource allocation problem of sharing the RF signal among the two receivers. The incoming RF signal is fed to Information Decoding (ID) and Energy Harvesting (EH) circuitries by applying either time-switching (TS) or power splitting (PS) schemes. In TS, the RF signal is split over two different parts of the time slot, one for EH and the other for ID, whereas in PS the incoming RF signal is fed to both, proportional to a given factor. In this work, we consider the class of PS policies.  In particular, we consider two types of PS policies: \emph{splitting} and \emph{no-splitting}. A splitting policy divides the RF signal into two parts with strictly non-zero power and feeds them to ID and EH circuitries, whereas no-splitting policy feeds the RF signal completely to either  EH or ID.

In inherently error-prone wireless communications systems, re-transmissions triggered by decoding errors have a major impact on the energy consumption of wireless devices. Hybrid automatic repeat request (HARQ) schemes are frequently used in order to reduce the number of re-transmissions by employing various channel coding techniques \cite{WHARQ}. Nevertheless, this comes at the expense of extra processing time and  energy associated with the enhanced error-correction decoders. A receiver employing HARQ encounters two major energy consuming operations: (1) sampling or Analog-to-Digital Conversion (ADC), which includes all RF front-end processing, and (2) decoding. The energy consumption attributed to sampling, quantization and decoding plays a critical role in energy-constrained networks which makes their study a non-trivial problem. The authors in \cite{de2014performance} investigated the performance of HARQ over an RF-energy harvesting point-to-point link, where the power transfer occurs over the downlink and the information transfer over the uplink. The authors studied the use of a TS policy when two HARQ mechanisms are used for information transfer; Simple HARQ (SH) and HARQ with Chase Combining (CC) \cite{harq}. Also, the authors in \cite{maha} studied the performance of HARQ in RF energy harvesting receivers, where heuristic TS policies are proposed to reduce the number of re-transmissions.  

In this paper, we consider a point-to-point link where an energy-abundant transmitter employs HARQ to deliver a message reliably to an EH receiver. The receiver has no energy source, so it relies on harvesting energy from the information-bearing RF signal. The channel is time-varying where the amount of  energy harvested and information collected varies depending on the quality of the channel.  The receiver aims to split the incoming RF signal between EH and ID so that the expected number of re-transmissions is minimized. Unlike prior works, e.g., \cite{zorzi}, in our work, we do not assume the availability of the channel state information (CSI) at the receiver\footnote{Due to the time and energy cost, the acquisition of CSI in EH networks is challenging. Some interesting ideas along this line, such as limited CSI feedback,  have been discussed in \cite{Clerckx}.}. 
\subsection{Contributions}
Our main contributions in this paper are summarized as follows:
\begin{itemize}
\item We formulate the problem of minimizing the expected number of re-transmissions using a Markov decision process (MDP). 
\item Due to the excessive number of states and actions in the MDP formulation, we use the special features of the EH HARQ framework to recast the MDP as a problem of minimizing the expected time to absorption in an absorbing Markov chain, significantly reducing the complexity associated with the MDP, when the wireless channel exhibits independent and identically distributed (i.i.d.),  and time-correlated properties, respectively.
\item For i.i.d. channels, we prove that there is an optimal policy that does not split the incoming RF energy and uses it solely either for ID or EH. As a result, we convert the original problem whose states and actions take over continuous values into  discrete ones, enabling a tractable solution.
\item The numerical solution of the MDP identifies multiple distinct policies that achieve the minimum expected number of re-transmissions, implying that the optimal policy is not unique.  Hence, we later completely characterize a class of simple-to-implement optimal policies.  Among those, harvest-first-store-later is an optimal policy lending itself for simple implementation on low complexity devices.
\item For a time-correlated channel, we once again show that there is an optimal policy that does not split the incoming RF energy. We develop a low complexity algorithm to determine the EH/ID decision for each state of the receiver.  Note that unlike the i.i.d. case, a simple policy such as harvest-first-store-later is no longer optimal for correlated channels as demonstrated in our numerical analysis.
\item We provide extensive numerical simulations to verify the analytical results established in the paper.
\end{itemize}


\subsection{Related Work}
Early works on wireless energy transfer \cite{early2} considered a point-to-point single antenna communication system and studied its rate-energy trade-off.  Single antenna systems are extended to single-input-multiple-output (SIMO) in \cite{liangliu}, multiple-input-single-output (MISO) in \cite{miso} and multiple-input-multiple-output (MIMO) system in \cite{mimo}. 

Note that EH devices harvest energy only in minuscule amounts (orders of $\mu W$s), so the energy consumption of the receiver circuitry to perform simple sampling and decoding can no longer be neglected. The authors in \cite{doost1} addressed  the energy consumption of sampling and decoding operations over a point-to-point link where the receiver harvests energy at a constant rate. In \cite{retarq2}, a decision-theoretic approach is developed to optimally manage the transmit energy of an EH transmitter transmitting to an EH receiver, where both the transmitter and the receiver harvests energy independently from a Bernoulli energy source. The receiver uses selective sampling (SS) and informs the transmitter about the SS information and its delayed battery state by feedback. Based on this feedback, the transmitter adjusts its transmission policy to minimize the packet error probability. 

Meanwhile, in \cite{doost2}, the performance of different HARQ schemes for an EH receiver harvesting energy from a deterministic energy source with a constant energy rate was studied.
In \cite{resist}, the impact of the battery's internal resistance at the receiver was analyzed for an EH receiver with imperfect battery, with the aim of maximizing the amount
of information decoded by the EH receiver.
While ignoring the sampling energy cost at the receiver, \cite{mehlul} investigates the performance of TS policies to maximize the amount of information decoded at the receiver operating over a binary symmetric channel (BSC), by optimizing the fraction of time used for harvesting energy and for extracting information.
For an EH transmitter and an EH receiver pair both harvesting ambient environmental energy with possible spatial correlation, \cite{outage_zhou} addresses the problem of outage minimization over a fading wireless channel with ACK-based re-transmission scheme by optimizing the power allocation  at the transmitter. 
In \cite{R3R1}, for a pair of EH transmitter-receiver employing ARQ and HARQ with binary EH process, packet drop probability over fading channels is minimized by optimally allocating power over different rounds of re-transmissions. In \cite{adaptarq}, an adaptive feedback mechanism for an EH receiver is proposed by taking into account the energy cost of sampling and decoding is proposed. The receiver is allowed to transmit a delayed feedback with the aim of efficiently utilizing the harvested energy in order to minimize the packet drop probability in the long run. In \cite{R3R4}, the outage probability for an EH receiver powered by RF transmissions is minimized by implementing HARQ. In particular, the transmitter optimally allocates two different power levels in charging and information transmission periods so that the probability of the event that information is not correctly received by the receiver due to either unsuccessful
message decoding or lack of minimum energy at the receiver is minimized. Although \cite{R3R4} is the most similar study to our work, it assumes that the channel stays constant during re-transmissions and it is known by the receiver. Differently, we assume that the wireless channel, with and without memory, varies over different instances of re-transmissions which calls for an online framework rather than an offline framework as in \cite{R3R4}. The problem of throughput optimization for an EH receiver operating in a multi-access network was studied in \cite{sufficient} where the receiver takes samples from the incoming RF signal to calculate the probability of a collision event and based on that decides to either utilize the incoming RF energy to replenish its battery or to extract information bits.



In \cite{mehdigilbert}, an EH transmitter intelligently adapts its channel sensing strategy with respect to a belief parameter it has about the channel condition to maximize its long term discounted throughput over a time correlated channel. In \cite{7492928}, maximization of long term weighted sum throughput, in an uplink scenario, for two RF EH transmitters is studied. The AP has the complete knowledge of the state of the network, i.e., battery levels, uplink and downlink CSI, and it calculates the optimal EH period, and the uplink durations of each transmitter at the beginning of each time slot. The finite horizon uplink throughput maximization for an EH transmitter with imperfect CSI and random EH process is studied in  \cite{7865904}, and the optimal power allocation problem at each time slot is formulated using dynamic programming (DP). \cite{R1} studies the rate-energy (R-E) region of separated and co-located SWIPT architectures where R-E region characterizes all the achievable rate and harvested energy pairs under a given transmit power constraint. A strategy  achieving the optimal R-E region is developed for the case of separated architecture. For the case of co-located architecture, two policies namely power splitting and time switching is investigated in terms of their achievable R-E region. In \cite{R2}, for a network with a transmitter, a relay and a destination node, two relaying protocols namely power splitting based relaying (PSR) and time switching based relaying (TSR)protocols are proposed. Analytical expressions for outage probability of delay limited transmission mode and ergodic capacity of delay tolerant transmission mode are derived. In contrast to \cite{R1,R2}, we show that there exists an optimal policy that does not split the incoming RF energy when HARQ mechanism is employed.

Differently from the available literature, we study the reliability of  transmission by an HARQ mechanism in a SWIPT scenario, over time varying channels with unknown CSI and by considering an accurate model of energy consumption of the EH receiver. We develop a novel Markovian framework for the analysis which facilitates characterizing the optimal decision at any given time. A major contribution of this work is that we prove that there exists an optimal no-splitting policy that minimizes the number of re-transmissions. This finding enables a tractable optimal solution by reducing a two dimensional uncountable state MC into a countable state MC. In particular, for i.i.d. channels, we show that policies such as harvest-first-store-later are optimal enabling simple-to-implement optimal policies suitable for low power EH devices. However, for the case of correlated channels, we show that an intelligent algorithm that utilizes the correlation information of the channel states, can significantly outperform those simple-to-implement policies.

\section{System Model and Preliminaries}\label{SystemModel}
\subsection{Channel Model and Receiver Architecture}
Consider a point-to-point time varying wireless link between a transmitter-receiver pair. 
The wireless channel is modeled according to a two-state block fading model where the states are GOOD and BAD\footnote{Note that the two-state channel process is an approximation of a more general multi-state time varying channel, where each state of the channel supports a maximum transmission rate. Here, we employ two-state channel process due to its analytical tractability.}.  Let $G_t\in\left\{0,\ 1\right\}$ be the state of the channel at time slot $t$ where BAD and GOOD states are denoted by $0$ and $1$, respectively. The CSI is neither available at the transmitter nor at the receiver due to the high computational and energy costs of transmitting and receiving a pilot signal necessary for measuring the CSI. We consider a  communication scheme where the transmitter is connected to a power source with an unlimited energy supply. The receiver is equipped with a separate rectifier circuit for EH and a transceiver for ID, both connected to the same antenna.



Time is slotted and each slot has a length of $N$ channel uses. We assume that $N$ is sufficiently large so that we can apply information theoretic arguments. The instantaneous achievable rate of the receiver is the maximum achievable mutual information between the output symbols of the transmitter and input symbols at the receiver. Let the achievable rate of the receiver be $R(t)$ at time $t$. As $N\to \infty$, $R(t)$ approaches the Shannon rate, and it can be computed as:
\small
\begin{align}
R(t) = \log_2(1 + P g(t)),
\end{align}
\normalsize
where $g(t)\in \left\{g_0, g_1\right\}$ is the channel power gain at time $t$ and $P$ is the noise-normalized transmit power of the transmitter. We assume that the transmitter power is fixed and known to the receiver. Let  $R_1$ and $R_0$ be the achievable rates corresponding to the channel states GOOD and BAD, respectively:
\small
\begin{align}
R_1 = \log_2(1 + P g_1),\label{R1R2a}\\
R_0 = \log_2(1 + P g_0).\label{R1R2b}
\end{align}
\normalsize
The instantaneous channel states are not known a priori so we employ an HARQ scheme with incremental redundancy (IR) for providing reliability \cite{wicker1995error}. In the following, we give a brief overview of HARQ-IR.




 \subsection{Brief Overview of HARQ}
 \label{sec:harq}
  HARQ is a well known method to provide reliable point to point communications \cite{wicker1995error}.  There are several types of HARQ implementations, e.g., simple HARQ, HARQ with Chase Combining (CC), repetition time diversity and incremental redundancy (IR). Note that in EH devices, CSI acquisition is cost prohibitive due to the energy and temporal cost of probing the channel. Hence, in this work, the transmitter is blind to the instantaneous channel conditions and it cannot adapt the code rates according to a particular channel gain. Thus, in our system, we consider HARQ-IR due to its superior throughput performance \cite{930931} compared to other alternatives as well as its robustness against the absence of CSI \cite{INRref}. Let us denote a message of the transmitter by $W\in \left\{1,2,\ldots,2^{NC}\right\}$, where $C$ denotes the rate of the information. Every incoming transport layer message into the transmitter is encoded by using a mother code of length $MN$ channel uses. The encoded message, $\mathbf{x}$, is divided into $M$ blocks, each of length $N$ channel uses, with a variable redundancy and it is represented by $\mathbf{x}=[x^1,\ldots,x^M]$. Let us assume that $x^1$ is transmitted at $t_1$. If $x^1$ is successfully decoded, then the receiver sends a 1-bit, error-free, zero-delay, Acknowledgement (ACK) message, otherwise, the transmitter times out after waiting a certain time period. In case of no ACK received, the transmitter transmits $x^2$ at time slot $t_2$ and the receiver combines the previous block $x^1$ with $x^2$. This procedure is repeated until the receiver accumulates $C$ bits of mutual information or maximum blocks of information, $M$, is sent. We assume that, $M$ is chosen sufficiently large so that the probability of decoding failure, due to exceeding the maximum number of re-transmissions, is approximately equal to zero. With HARQ-IR scheme, after $r$ re-transmissions, the amount of accumulated mutual information at the receiver is $\sum^{r}_{k=1} R(t_k)$. The receiver, given that it has sufficient energy, can perform a successful decoding attempt after $r$ re-transmissions, if the amount of accumulated mutual information exceeds the information rate of the transmitted message, i.e., $\sum^{r}_{k=1} R(t_k)\geq C$. We assume that each message is encoded at rate $R_1$ i.e., $C=R_1$ so that a transmission  in a GOOD channel state carries  all the information needed for decoding\footnote{Note that this assumption is practically reasonable, since a time slot is typically defined as the duration of time necessary for transmission of a single information packet.}.
  
\subsection{Energy Harvesting and Consumption Model}
In the following, we assume that the receiver has a sufficiently large battery and memory, so there is no energy or information overflow. The receiver utilizes a PS policy, where $\rho(t)\in [0,1]$ denotes the power splitting parameter at the beginning of time slot $t$. Note that $\rho(t)=0$ indicates that the received signal is used solely for mutual information accumulation, and $\rho(t)=1$ indicates that the received signal is used solely for harvesting energy. Any value of $\rho(t)$ between 0 and 1 refers to the case where the received signal is used for both harvesting energy and mutual information accumulation. 

We incorporate a simplified energy harvesting model, which facilitates the formulation of a tractable optimization problem.  In this model, the receiver harvests a maximum of $e\geq 1$  energy units in the GOOD channel state and zero units during the BAD channel state\footnote{The maximum energy is harvested if the received signal is completely directed to the energy harvester, i.e., $\rho(t)=1$.}. Typically, an EH device has two stages  in its energy harvesting circuitry \cite{Talla}: a rectifier stage that converts the incoming alternating current (AC) radio signals into direct current (DC); and a DC-DC converter that boosts the converted DC signal to a higher DC voltage value to produce the voltage required to charge the battery. The main limitation in an energy harvester is that every DC-DC converter has a minimum input voltage threshold below which it cannot operate. Hence, when the channel is in a BAD state, the input voltage is below the threshold of the DC-DC converter and no energy is harvested.  Even though the receiver cannot harvest any RF energy in a BAD channel state, it can still accumulate mutual information since ID circuit operates at a lower power sensitivity,  e.g., $-10$ dBm for EH and $-60$ dBm for ID circuits \cite{EHsurvey}.


The energy consumption of HARQ was recently investigated in \cite{rosas2016optimizing}, and it was identified that the energy is consumed at the start up of the receiver, during decoding, for operating passband receiver elements (low-noise amplifiers, mixers, filters,
frequency synthesizers, etc.), and for providing feedback to the transmitter. In order to develop a tractable optimization frame work, we consider the model in \cite{rosas2016optimizing}, and combine the individual costs of energy into two parameters only:  the receiver consumes $E_d\geq 1$ energy units for a decoding attempt and 1-energy unit for each mutual information accumulation event per time slot\footnote{One energy unit is normalized to the energy cost of operating the RF transceiver circuit during one time slot.}, i.e., operating the passband receiver elements.



\section{The Minimum Expected Number of Re-transmissions For I.I.D. Channels}
\label{MCF}
In this section, we calculate the minimum expected number of re-transmissions needed for successful decoding for time varying channels. We first consider an i.i.d. channel, and in Section VI, we will investigate the system under a time correlated channel model. Note that the receiver requires at least $E_d$ units of energy and $R_1$ bits of information before it can successfully decode the transmitted packet. Let the system states be  $(b,\ m)$, where $b$ is the total residual battery level and $m$ is the total accumulated mutual information normalized by $R_0$. For clarity of presentation, in the rest of the paper we assume that $R_0=1$.
Our objective is to optimally determine a scheduling policy $\rho(t)$ so that the transmission is successfully decoded with a minimum delay at the receiver. We formally define $\rho(t)$ next.


\begin{definition}
A scheduling policy  $\boldsymbol{\pi}= (\rho(1), \rho(2), \ldots,)$ is
a sequence of decision rules as such the $k$th element of $\boldsymbol{\pi}$
determines the power splitting ratio at $k$th time slot based on the observed system state $(b,\ m)$ at the beginning of this
time-slot for $t\in \{1,2,\ldots\}$. Similarly, a tail scheduling
policy $\boldsymbol{\pi}_t = (\rho(t), \rho(t+1), \ldots)$ is a sequence of decision rules
that determines the power splitting ratios for the time slots from $t$ to $\infty$.
\end{definition}

Let the probability that the channel is in GOOD
state be $\lambda$, i.e., $\mathds{Pr}\left[G_t=1\right]=\lambda$. The problem can be mathematically modeled as a two-state Markov chain (MC). Also, let the states of the MC be  $(b,\ m)$. It should be noted that the receiver is blind to the CSI before choosing the power splitting ratio. However, after it decides to sample the incoming RF signal for mutual information accumulation, the amount of the information in the sampled portion of the RF signal is revealed to the receiver. Because the scheduling policy is blind to the CSI, its decision only depends on $(b,\ m)$.

\subsection{Markov Decision Process (MDP) Formulation}
\label{sec:MDP}

At any given time $t$, the next state of the system only depends on the current state, $(b,\ m)$, and the power split ratio $\rho(t)$. Hence, we can formulate the problem as an MDP. Let $f^{\boldsymbol{\pi}}(t)\in\{-1,0\}$ be an indicator function taking a value of $0$ if the message can be decoded at the end of slot $t$ under policy $\boldsymbol{\pi}$, and a value of $-1$ otherwise. Then, the optimization problem we aim to solve is given as,
\small
 \begin{align}
 \max_{\boldsymbol{\pi}} \sum_{t=0}^{\infty} f^{\boldsymbol{\pi}}(t).
 \end{align}
\normalsize
 Let $V^{\boldsymbol{\pi}}(b,0)$ be the expected discounted reward with initial state $S_0 =(b,0)$ under policy $\boldsymbol{\pi}$ with discount factor $\beta \in [0, 1)$. The expected discounted reward has the following expression
\small
 \begin{align}
 V^{\pi}(b,\ 0) = \mathds{E}^{\boldsymbol{\pi}}\left[\sum^{\infty}_{t=0}\beta^{t}U(S_{t},\rho(t))|S_{0}=(b,\ 0)\right],\label{Vdef}
 \end{align}
\normalsize
 where $\mathds{E}^{\boldsymbol{\pi}}$ is the expectation with respect to the policy $\boldsymbol{\pi}$, $t$ is the time index, $\rho(t) \in [0,1]$ is the action chosen at time $t$, and $U(S_{t},\rho(t))$ is the instantaneous reward acquired when the current state is $S_t$.
In the rest of the paper, we use $\rho(t)$ and $\rho(b,m)$ interchangeably by assuming that at time slot $t$, the system is at state $(b,\ m)$. The battery is recharged with incoming RF signal depending on the value of the power split ratio $\rho(t)$.  Meanwhile, one unit of energy is consumed in order to accumulate non-zero bits of mutual information.  Hence, the evolution of the battery state is characterized as follows:
\small
\begin{align}
B(t)=&\left\{
\begin{array}{ll}
B(t-1)+\rho(t) e-\mathds{1}_{\rho(t)\neq 1},& \text{if}\  G_t=1\\
B(t-1)-\mathds{1}_{\rho(t)\neq 1},& \text{if}\  G_t=0
\end{array}
\right.,\label{B}
\end{align}
\normalsize
where $\mathds{1}_{\rho(t)\neq 1}=0$, if $\rho(t)=1$, and $\mathds{1}_{\rho(t)\neq 1}=1$, otherwise.

According to (\ref{R1R2a}) and (\ref{R1R2b}), the transmit power is equal to $P = \frac{2^{R_1}-1}{g_1} = \frac{2^{R_0}-1}{g_0}$. At the power splitter, $1-\rho(t)$ portion of the received power is directed into the ID, so the achievable mutual information accumulation is:
\small
\begin{align}
R(t) = \log_2(1 + g(t)P(1-\rho(t))).\label{splittedR}
\end{align} 
\normalsize
Note that the maximum value of the mutual information is attained by setting $\rho=0$. Inserting the value of $P$ in (\ref{splittedR}) for GOOD and BAD channel states gives the mutual information accumulation in these states respectively for a given power splitting ratio $\rho$ as
\small
\begin{align}
R^H(\rho)= \log_2(\rho+(1-\rho) 2^{R_1}),\\
R^L(\rho)= \log_2(\rho+(1-\rho) 2^{R_0}).
\end{align}
\normalsize
 Thus, the accumulated mutual information, $I(t)$, evolves as:
\small
\begin{align}
I(t)=&\left\{
\begin{array}{ll}
\min(I(t-1)+R^H(\rho(t)),R_1),& \text{if}\  G_t=1\\
\min(I(t-1)+R^L(\rho(t)),R_1),& \text{if}\  G_t=0
\end{array}
\right..\label{I}
\end{align}
\normalsize
Note that (\ref{I}) follows from the operation of HARQ-IR which is described in Section \ref{sec:harq} where the received messages over different time slots are combined in such a way that the mutual information of the combined messages is the summation of the individual mutual information of the messages. The instantaneous reward is zero if the message can be correctly decoded, and it is minus one otherwise. Recall that the decoding operation is successful if and only if the accumulated mutual information is above a certain threshold, and the battery level is sufficient to decode the message. Hence, the instantaneous reward is given as follows:
\small
 \begin{align}
U(S_{t},\rho(t))=&\left\{
 \begin{array}{ll}
 0,& \text{if}\  B_t\geq E_d,\ \text{and}\ I(t)\geq R_1,\\
 -1,& \text{if}\  \text{otherwise}.
 \end{array}
 \right..
 \end{align}
\normalsize
 Define the value function $V(b,m)$ as
\small
 \begin{align}
 V(b,\ m) &= \max_{\pi}V^{\pi}(b,\ m),\ \forall b\in[0,\infty),\ \forall m\in\left[0,\ R_1\right]\label{Vmax}.
 \end{align}
\normalsize
 The value function $V(b, m)$ satisfies the Bellman equation
\small
 \begin{align}
 V(b, m) = \max_{0\leq \rho \leq 1}V_\rho(b,m), \label{belman}
 \end{align}
\normalsize
 where $V_\rho(b,m)$ is the expected reward achieved by taking action $\rho$ when
 the state is $(b, m)$ and is given by
\small
 \begin{align}
 V_{\rho}(b,m)=U((b,\ m),\rho)+\beta\mathds{E}\left[V(\acute{b},\ \acute{m})|S=(b,\ m)\right], \label{actionvalue}
 \end{align}
\normalsize
 where $(\acute{b}, \acute{m})$ is the next visited state and the expectation is over the distribution of the next state. The use of expected discounted reward allows us to obtain a tractable solution, and one can gain insights into the optimal policy when $\beta$ is close to $1$. Value iteration algorithm (VIA) is a standard tool to solve Bellman equations such as the one in (\ref{belman}).  However, this problem suffers from the curse of dimensionality \cite{sutton}. Note that from (6) and (10), the problem is a two dimensional uncountable state  MDP with continuous actions at every state. Also, letting $\beta \rightarrow 1$, to approximate the average reward, slows down the algorithm to the point of infeasibility [30]. Hence, in the following, we take advantage of the special structure of our problem to derive an important characteristic of the optimal policy. The flow of the paper is depicted in Figure \ref{fig:diagram}.

 \begin{figure}[ht]
  \centering
    \includegraphics[scale=0.2]{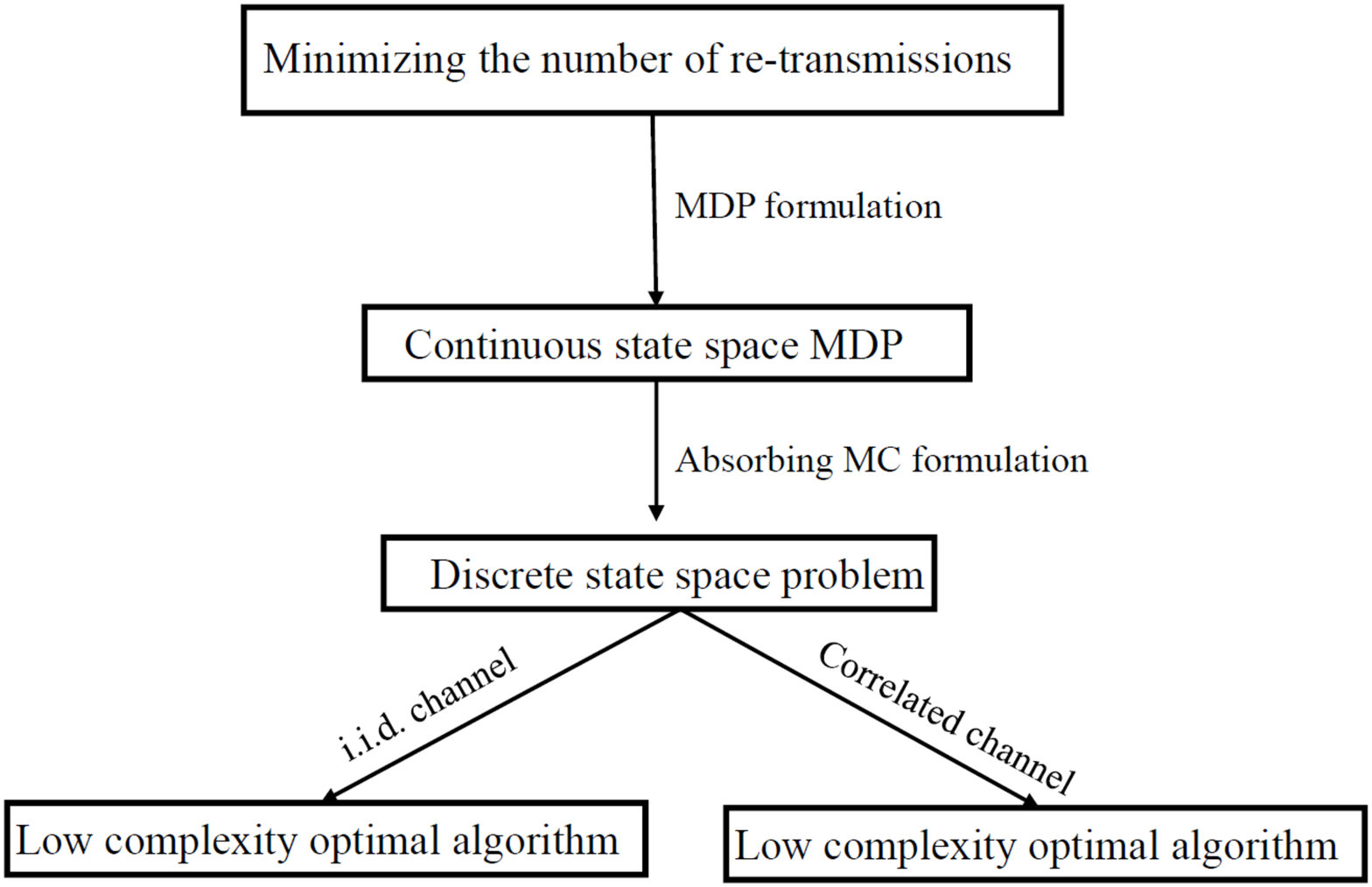}
		  \caption{A brief overview of the paper.}
			\label{fig:diagram}
\end{figure}

\subsection{Absorbing Markov Chain Formulation}
Note that the MC describing the operation of our system is an {\em absorbing} MC, where all states except those $(b, m)$ where $b\geq E_d$, and $m\geq R_1$ are transient states.  The absorbing states are those where the receiver has both sufficient energy and information accumulated to correctly decode. In an absorbing chain, starting from a transient state, the chain makes a finite number of visits to some transient states before its eventual absorption into one of the absorbing states. Hence, the mean time to absorption of the chain, starting from transient state $i$  initially, is the sum of the expected numbers of visits made to transient states. In an absorbing MC, the expected number of steps taken before being absorbed in an absorbing state characterizes the \emph{mean time to absorption}. Hence, the mean time to absorption starting from a given transient state $(b,\ m)$ provides the number of re-transmissions until successful decoding when the battery has $b$ units of energy and the memory contains $m$ bits of information. 


 After establishing the $\rho$ dependent state evolution of $B(t)$ and $I(t)$, we can formally introduce the state transition probabilities of the Markov chain as follows:

 \begin{align}
 \rho=1\Rightarrow&\left\{
 \begin{array}{ll}
 \mathds{Pr}\left((B,I),(B+l,I)\right) = \lambda\\
 \mathds{Pr}\left((B,I),(B,I)\right) = 1-\lambda
 \end{array}
 \right.,
 \end{align}
 \begin{align}
 \rho=0\Rightarrow&\left\{
 \begin{array}{ll}
 \mathds{Pr}\left((B,I),(B-1,R_1)\right) = \lambda\\
 \mathds{Pr}\left((B,I),(B-1,I+1)\right) = 1-\lambda
 \end{array}
 \right.,
 \end{align}

 \begin{align}
 0<\rho<1\Rightarrow&\left\{
 \begin{array}{ll}
 \mathds{Pr}\left((B,I),(B-1+\rho l,I+R^H(\rho))\right) = \lambda\\
 \mathds{Pr}\left((B,I),(B-1,I+R^L(\rho))\right) = 1-\lambda
 \end{array}
 \right.,
 \end{align}
where $\mathds{Pr}(\mathbf{x},\mathbf{y})$ is the transition probability from state $\mathbf{x}$ into state $\mathbf{y}$, $B\in [0,\infty)$ and $I\in [0,\ R_1]$. The state transition probabilities of the Markov chain associated with $\rho$ is depicted in Figure \ref{fig:DTMC}.

 \begin{figure}
         \centering
         \begin{subfigure}[t]{0.15\textwidth}
                 \includegraphics[width=\textwidth]{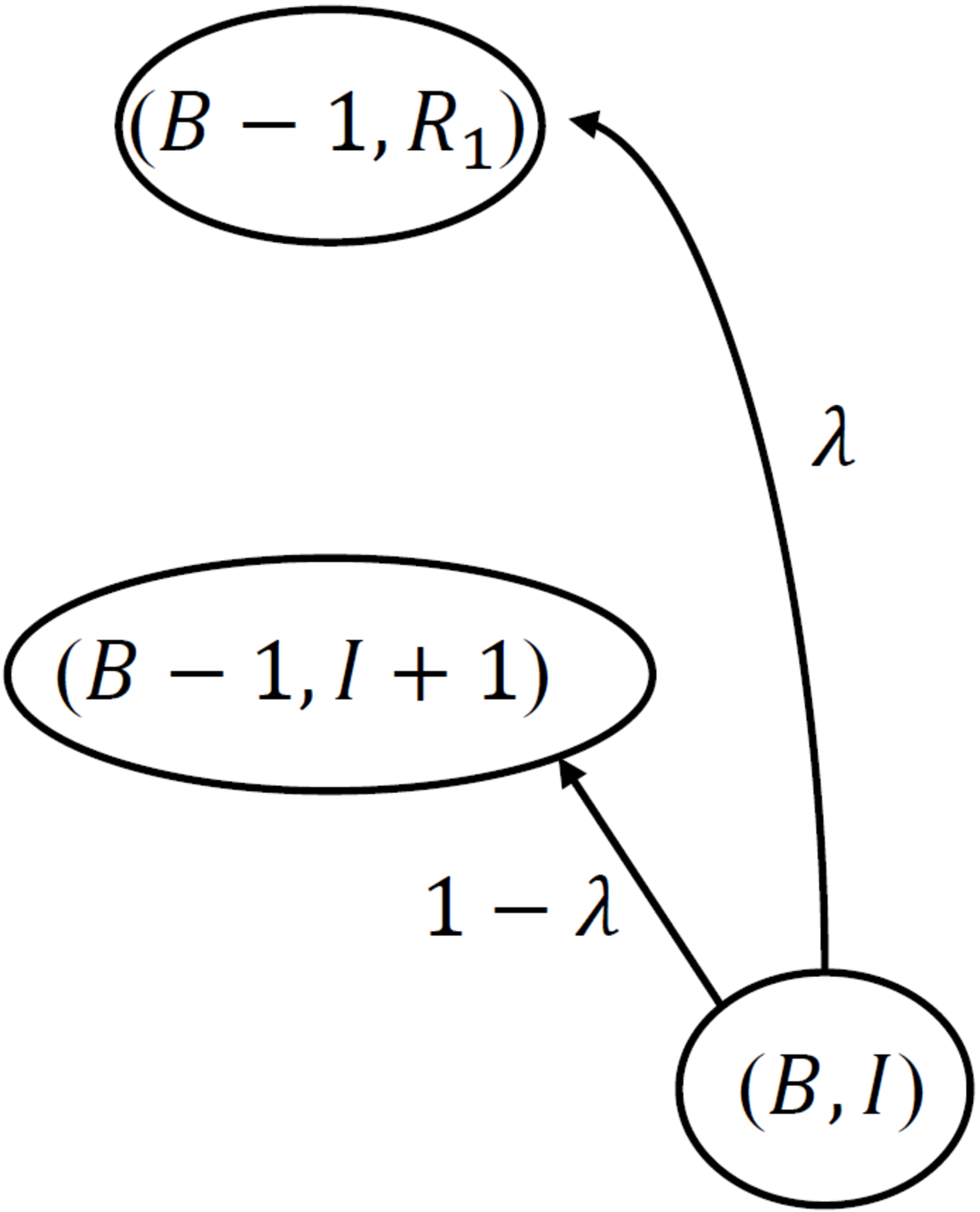}
                 \caption{$\rho = 0$.}
                 \label{fig:DTMC1}
         \end{subfigure}
         \hfill
         \begin{subfigure}[t]{0.15\textwidth}
                 \includegraphics[width=\textwidth]{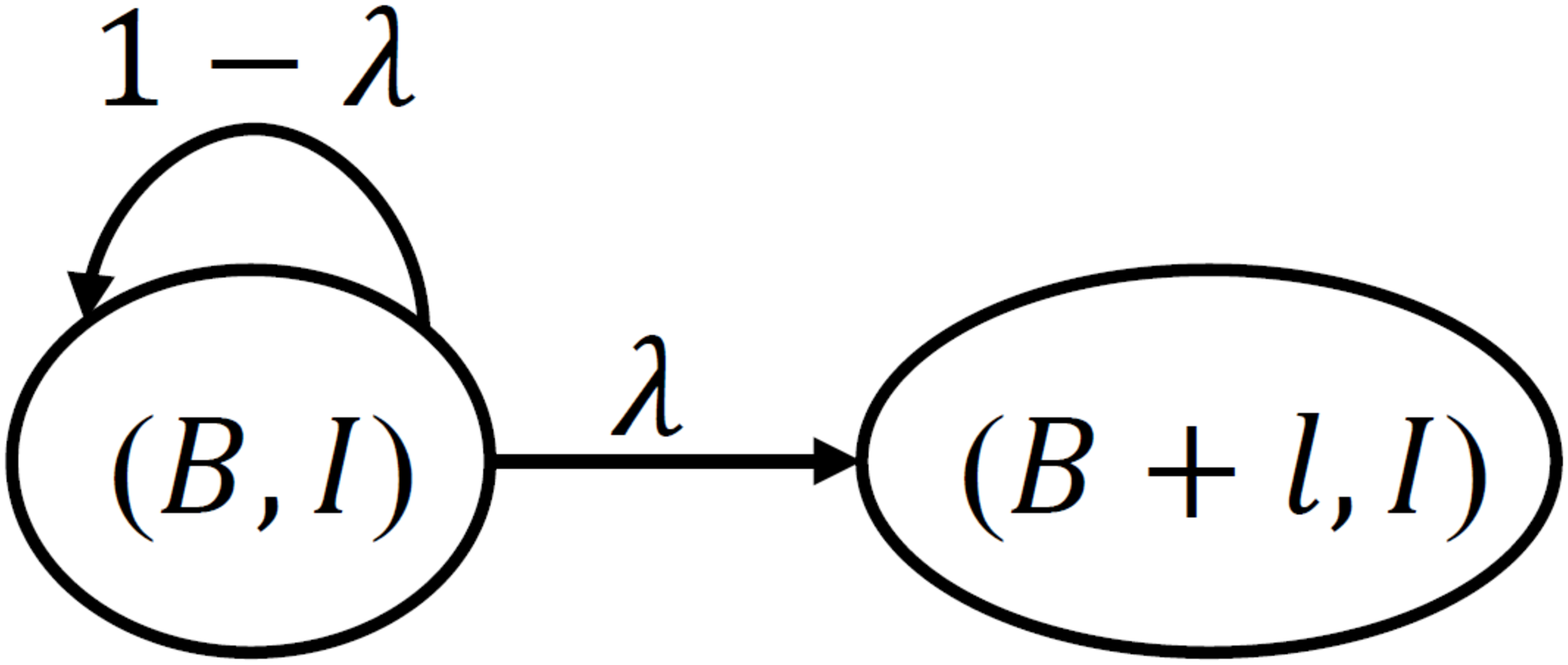}
                 \caption{$\rho = 1$. }
                 \label{fig:DTMC2}
         \end{subfigure}
                 \begin{subfigure}[t]{0.15\textwidth}
                 \includegraphics[width=\textwidth]{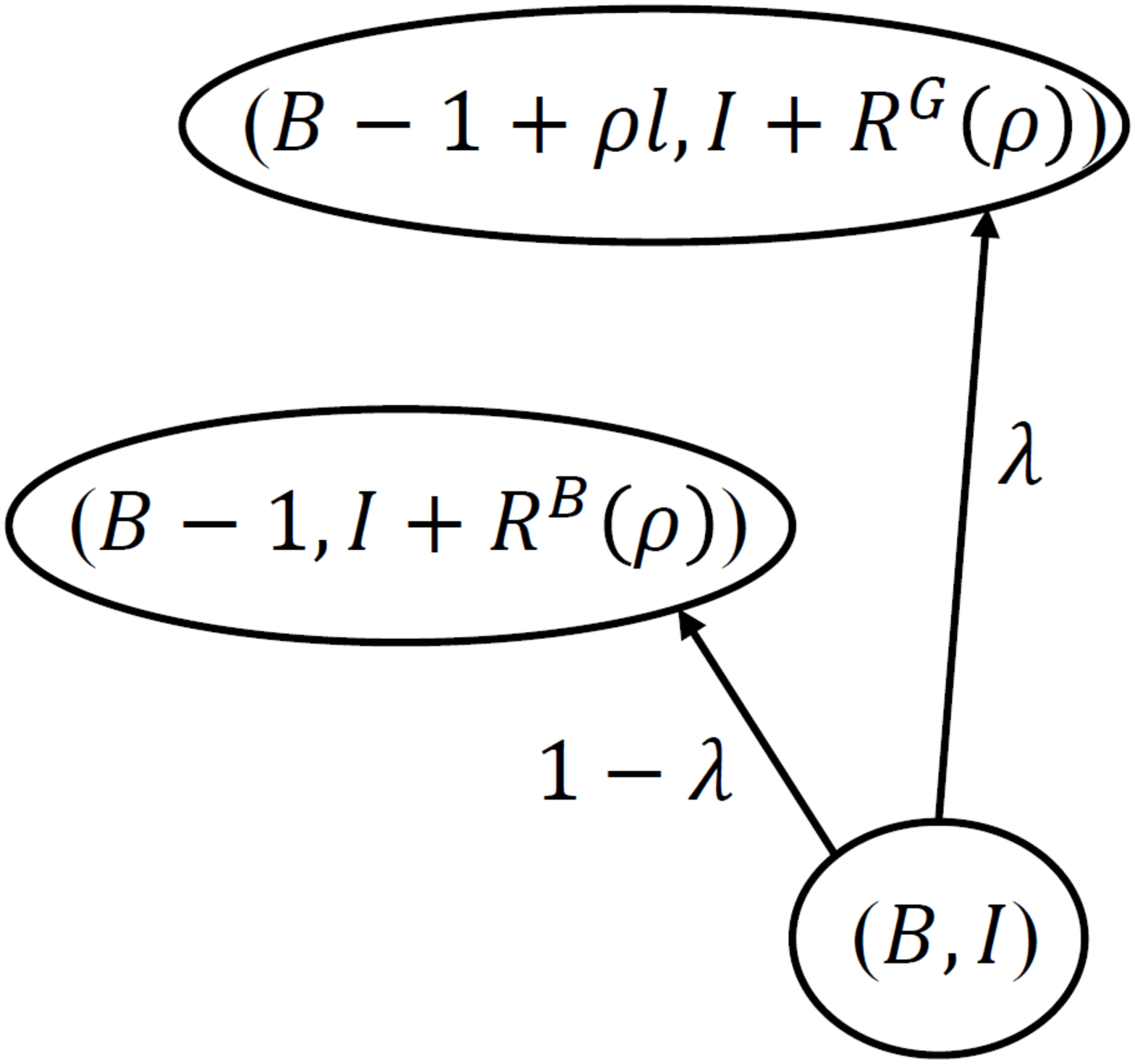}
                 \caption{$0<\rho<1$.}
                  \label{fig:DTMC3}
         \end{subfigure}
 	                \caption{State transition probabilities of the Markov chain associated with                                   $\rho$.}
                  \label{fig:DTMC}
              
 \end{figure}
In the following, we perform \emph{first-step analysis}, by conditioning on the first step
the chain makes after moving away from a given initial state to obtain the mean time to absorption. Let $k_{b,m}$ be the expected number of transitions needed to hit
an absorbing state when the MC starts from state $(b,\ m)$. The analysis is performed by assuming that the MC is in steady-state.

Let us first consider two trivial cases; when the battery has less than one unit of energy, i.e., $b<1$, in which case the receiver has no option but harvest the incoming RF signal, and when the amount of accumulated mutual information is $R_1$, in which case there is no point in further accumulating mutual information since the receiver has sufficient mutual information to decode the incoming packet.  For these cases, the mean time to absorption starting from an initial state $(b,\ m)$ is
\begin{equation} \label{kBless1}
\begin{split}
k_{b,m} &= 1+ \lambda k_{b+e,m}+(1-\lambda) k_{b,m} \\
 &=\frac{1}{\lambda} + k_{b+e,m},\hspace{1cm} \text{if}\ b<1\ \text{or}\ m = R_1.
\end{split}
\end{equation}
Note that in (\ref{kBless1}), one slot is needed to harvest energy, and depending on the channel state in that slot,  the battery state either transitions to $b+e$ or remains the same. The following lemma plays an important role in establishing the structure of the optimal policy.

\begin{lemma}
\label{lemma1}
For any $E_d-i\cdot e \leq b < E_d-(i-1)\cdot e$ such that $i=1,\ldots,E_d$, given that $m=R_1$, the mean time to absorption is given by, $k_{b,R_1} = \frac{i}{\lambda}$.
\end{lemma}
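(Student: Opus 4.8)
The plan is to prove the lemma by induction on $i$, using the recursion (\ref{kBless1}) for the regime $m = R_1$ together with the fact that any state $(b, R_1)$ with $b \geq E_d$ is absorbing, so $k_{b, R_1} = 0$ there. The reason (\ref{kBless1}) is the relevant recursion when $m = R_1$ is exactly the point made in the paragraph just before the lemma: once $m = R_1$ the receiver already has all the mutual information it needs, so accumulating more is useless and would strictly waste an energy unit; hence $\rho(t) = 1$ in every remaining slot is optimal, the battery is the only coordinate that keeps evolving, and it increases by $e$ with probability $\lambda$ and stays put with probability $1-\lambda$.

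First I would settle the base case $i = 1$: for any $b$ with $E_d - e \leq b < E_d$, one harvesting slot sends the battery to $b + e \geq E_d$ with probability $\lambda$, which is an absorbing state, so by (\ref{kBless1}), $k_{b, R_1} = \frac{1}{\lambda} + k_{b+e, R_1} = \frac{1}{\lambda}$, i.e. $\frac{i}{\lambda}$ for $i = 1$.

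For the inductive step, I would assume $k_{b', R_1} = \frac{i-1}{\lambda}$ for every $b'$ with $E_d - (i-1)e \leq b' < E_d - (i-2)e$, and take any $b$ with $E_d - i e \leq b < E_d - (i-1)e$. Adding $e$ to these bounds shows that $b + e$ lands in $[E_d - (i-1)e,\ E_d - (i-2)e)$, so the inductive hypothesis gives $k_{b+e, R_1} = \frac{i-1}{\lambda}$, and (\ref{kBless1}) then yields
\begin{align}
k_{b, R_1} = \frac{1}{\lambda} + k_{b+e, R_1} = \frac{1}{\lambda} + \frac{i-1}{\lambda} = \frac{i}{\lambda},
\end{align}
completing the induction. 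Since $e \geq 1$ and the battery is nonnegative, every non-absorbing state with $m = R_1$ has $0 \leq b < E_d$ and therefore falls into one of the intervals indexed by $i \in \{1, \dots, E_d\}$, so these cases are exhaustive.

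The one point that requires care is the very first step: being sure that it really is the ``harvest-only'' action (and hence the recursion (\ref{kBless1})) that an optimal policy uses when $m = R_1$, rather than some splitting action. But this is precisely the informal argument the authors already give immediately before stating the lemma — any $\rho < 1$ burns an energy unit to accumulate useless information, whereas $\rho = 1$ harvests the maximal $e$ units — so no extra work is needed, and the remainder is the routine induction above. I would also note that the recursion terminates, and the answer is the finite value $i/\lambda$ rather than an infinite series, exactly because each GOOD slot adds $e$ to the battery and absorption occurs the moment it reaches $E_d$.
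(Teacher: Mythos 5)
Your proof is correct and follows essentially the same route as the paper's: an induction on $i$ using the harvest-only recursion $k_{b,R_1} = \tfrac{1}{\lambda} + k_{b+e,R_1}$, with the base case $i=1$ resolved because $b+e \geq E_d$ lands in an absorbing state. Your added remarks on exhaustiveness of the intervals and on why $\rho=1$ is the relevant action at $m=R_1$ are consistent with the paper's setup and require no further work.
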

\begin{proof}
The proof is given in Appendix \ref{prooflemma1}.
\end{proof}

We will use Lemma \ref{lemma1}  to show that the optimal policy minimizing the mean time to absorption \emph{does not} need to split the incoming RF signal. In order to show this, let us define two tail policies $\boldsymbol{\pi}^i_t=(a_i,\boldsymbol{\pi}_{t+1})$, $i=split,no-split$ taking different actions $a_i$, in the current slot, but following the same set of actions, $\boldsymbol{\pi}_{t+1}$ afterwards\footnote{Note that $(a_i,\boldsymbol{\pi}_{t+1})$ defines a tail policy obtained by concatenating action $a_i$ in the current slot with tail policy $\boldsymbol{\pi}_{t+1}$.}. Let policy $\boldsymbol{\pi}^{split}_t=(\rho, \boldsymbol{\pi}_{t+1})$ be a tail policy that always splits the incoming RF energy, i.e., $0<\rho<1$, except when $B(t)<1$ or $I(t)=R_1$, when it only harvests energy.
Assume that the state of the system is $(b,\ m)$ at time slot $t$. Then, the mean time to absorption for tail policy $\boldsymbol{\pi}^{split}_t$ is:
\small
\begin{align}
k^{\boldsymbol{\pi}^{split}}_{b,m} = 1 + \lambda k_{b-1+\rho e,m+R^H(\rho)} + (1-\lambda) k_{b-1,m+R^L(\rho)},\label{k_pi}
\end{align}
\normalsize
where $k_{x,y}$ is the mean time to absorption of policy $\boldsymbol{\pi}_{t+1}$ beginning at state $(x,y)$. Note that with probability $\lambda$ the channel is in  GOOD state, and thus, $\rho\cdot e$ units of energy is harvested\footnote{We assume that the energy harvesting circuit is generating energy linearly proportional to the energy of the incoming RF signal.}. However, one unit of energy is spent by operating the transceiver to accumulate $R^H(\rho)$ bits of mutual information. Meanwhile, with probability $1-\lambda$ the channel is in  BAD state, and no energy is harvested, but the transceiver still consumes one unit of energy to accumulate $R^L(\rho)$ bits of mutual information.

Under tail policy $\boldsymbol{\pi}^{no-split}_t$ the RF signal is never split at time slot $t$, but rather, it is completely used for  mutual information accumulation except when $B(t)<1$ or $I(t) = R_1$ when it harvests energy only. In a similar way as before, we may calculate $k^{\boldsymbol{\pi}^{no-split}}_{b,m}$ as follows:
\small
\begin{align}
k^{\boldsymbol{\pi}^{no-split}}_{b,m} = 1 + \lambda k_{b-1,R_1} + (1-\lambda)k_{b-1,m+ R_0}.\label{kalpha}
\end{align}
\normalsize
\begin{theorem}
\label{TH_dis}
Policy $\boldsymbol{\pi}^{no-split}_t$ in \eqref{kalpha} achieves an expected number of re-transmission that is never worse than that of policy $\boldsymbol{\pi}^{split}_t$ in (19), i.e., $k^{\boldsymbol{\pi}^{no-split}}_{b,m}\leq k^{\boldsymbol{\pi}^{split}}_{b,m}$ for every $b=0,1,\ldots$ and $m=0,1,\ldots,R_1$.
\end{theorem}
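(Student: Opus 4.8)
\textit{Proof plan.}
I would begin from the two one-step expressions \eqref{k_pi} and \eqref{kalpha}, cancel the common leading unit, and reduce the claim to
\begin{align}
\lambda\, k_{b-1,R_1} + (1-\lambda)\, k_{b-1,m+R_0}\ \le\ \lambda\, k_{b-1+\rho e,\, m+R^H(\rho)} + (1-\lambda)\, k_{b-1,\, m+R^L(\rho)}, \nonumber
\end{align}
which I would then establish term by term: a ``good-channel'' comparison weighted by $\lambda$ and a ``bad-channel'' comparison weighted by $1-\lambda$, throughout using the convention $k_{x,y}=k_{x,R_1}$ for $y\ge R_1$. A preliminary ingredient is a monotonicity property of the mean time to absorption under the common tail $\boldsymbol{\pi}_{t+1}$: $k_{x,y}$ is non-increasing in the accumulated mutual information $y$ and in the battery level $x$, since more accumulated information can only move the receiver closer to the threshold $R_1$ while, by \eqref{B}, the battery dynamics do not depend on $I(t)$; monotonicity in $x$ is already the content of Lemma~\ref{lemma1} at $y=R_1$ and extends by the same argument. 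Granting this, the bad-channel term is immediate, since $R^L(\rho)=\log_2\!\big(\rho+(1-\rho)2^{R_0}\big)\le R_0$ for every $\rho\in[0,1)$ gives $m+R^L(\rho)\le m+R_0$ and hence $k_{b-1,m+R_0}\le k_{b-1,m+R^L(\rho)}$.

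The substantive step is the good-channel term, $k_{b-1,R_1}\le k_{b-1+\rho e,\, m+R^H(\rho)}$. This is \emph{not} a consequence of monotonicity alone: splitting in a good slot simultaneously hands the receiver extra energy ($b-1+\rho e>b-1$, which lowers $k$) and leaves a strictly positive information deficit $R_1-m-R^H(\rho)>0$ (which raises $k$), so these competing effects must be weighed quantitatively. The structural fact I would lean on is that splitting harvests strictly less than one full slot's worth of energy, $\rho e<e$; therefore $b-1$ and $b-1+\rho e$ lie either in the same interval $[E_d-ie,\,E_d-(i-1)e)$ of Lemma~\ref{lemma1}, or in two consecutive such intervals. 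In the first case Lemma~\ref{lemma1} (with $k_{x,R_1}=0$ for $x\ge E_d$) yields $k_{b-1+\rho e,R_1}=k_{b-1,R_1}$, and monotonicity in $y$ closes it: $k_{b-1+\rho e,\,m+R^H(\rho)}\ge k_{b-1+\rho e,R_1}=k_{b-1,R_1}$.

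In the consecutive-interval case the energy advantage of splitting is worth exactly one index of Lemma~\ref{lemma1}, i.e.\ $1/\lambda$ in its units, so it remains to show that the information deficit created in the good slot costs at least $1/\lambda$ additional expected transitions; here I would again invoke \eqref{kBless1} and Lemma~\ref{lemma1} to lower-bound $k_{b-1+\rho e,\,m+R^H(\rho)}$. The idea is that starting from any state with $I<R_1$ at least one further transition is unavoidable and, more importantly, completing the mutual information forces the receiver either to wait for a later good slot (a delay of mean $1/\lambda$) or to collect the missing bits over bad slots, each of which also consumes a unit of energy and thus cannot drive the residual absorption time below the value Lemma~\ref{lemma1} assigns at battery level $b-1$. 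I would then sweep up the boundary cases -- $b<1$ (forced harvest, no split available), $b-1\ge E_d$ (the successor $(b-1,R_1)$ is already absorbing, so $k_{b-1,R_1}=0$), and $m+R^H(\rho)\ge R_1$ (both actions restore full information after a good slot, so the good-channel comparison becomes purely about energy and has to be combined with the bad-channel information surplus $m+R_0>m+R^L(\rho)$) -- which are routine once the generic case is settled. Assembling the good- and bad-channel inequalities with weights $\lambda$ and $1-\lambda$ gives $k^{\boldsymbol{\pi}^{no-split}}_{b,m}\le k^{\boldsymbol{\pi}^{split}}_{b,m}$.

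The step I expect to be the main obstacle is exactly the quantitative good-channel estimate in the consecutive-interval case: bounding from below, uniformly over the common tail $\boldsymbol{\pi}_{t+1}$, the extra expected time to absorption caused by carrying a fractional information deficit out of a good slot, and showing it is at least as large as the benefit -- at most $1/\lambda$ -- of the extra harvested energy $\rho e<e$. Everything else -- the cancellation, the bad-channel term, and the same-interval case -- is bookkeeping resting on the monotonicity of $k$ and the closed form supplied by Lemma~\ref{lemma1}.
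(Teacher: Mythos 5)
Your plan follows the same skeleton as the paper's proof: cancel the leading $1$, handle the bad-channel term by monotonicity of $k_{b,\cdot}$ in the information coordinate together with $R^L(\rho)\le R_0$, and handle the good-channel term by pushing the information coordinate up to $R_1$ and then comparing battery levels through Lemma~\ref{lemma1}. The paper closes the good-channel term in one line: $k_{b-1+\rho e,\,m+R^H(\rho)}\ge k_{b-1+\rho e,\,R_1}$ by the same monotonicity, and then $k_{b-1+\rho e,\,R_1}=k_{b-1,\,R_1}$ because $b-1\le b-1+\rho e<b-1+e$ keeps the battery within the piecewise-constant structure of Lemma~\ref{lemma1}, so the lower bound on $k^{\boldsymbol{\pi}^{split}}_{b,m}$ coincides exactly with \eqref{kalpha}. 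You reproduce this argument only in your ``same-interval'' case.

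The genuine gap is that your decisive step -- the good-channel comparison when $b-1$ and $b-1+\rho e$ fall in consecutive Lemma~\ref{lemma1} intervals -- is never proved: you state it is the main obstacle and offer only a heuristic (``the information deficit costs at least $1/\lambda$''), with no uniform lower bound on $k_{b-1+\rho e,\,m+R^H(\rho)}$ under an arbitrary common tail $\boldsymbol{\pi}_{t+1}$. Note that this crossing case can occur only when $\rho e\ge 1$, i.e.\ $e\ge 2$ (since $b-1$ and the interval endpoints are integers); for $e=1$ it is vacuous and your same-interval argument already reproduces the paper's proof verbatim. So you have correctly put your finger on the one place where the equality $k_{b-1+\rho e,R_1}=k_{b-1,R_1}$ is not automatic -- a subtlety the paper itself passes over by asserting the equality directly from Lemma~\ref{lemma1} -- but, as written, your proposal does not establish Theorem~\ref{TH_dis} in that regime, and the auxiliary monotonicity of $k$ in the battery level that you invoke for an arbitrary fixed tail policy is likewise asserted rather than shown (the paper needs, and uses, only monotonicity in $m$). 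To match the paper's proof you should either restrict to the no-crossing situation and argue as above, or actually supply the quantitative estimate you flag as missing.
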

\begin{proof}
The proof is given in Appendix \ref{proofTH_dis}.
\end{proof}
Theorem \ref{TH_dis} proves that a no-splitting policy can achieve the minimum number of re-transmissions. Hence, in the latter part of the paper, we focus on characterizing the optimal no-splitting policy by determining the scheduling decision between EH or ID for each state of the MC.  Therefore, the state space of the discrete MC associated with the optimal no-splitting policy is  $b = 0,1,\ldots,\infty$, and $m = 0,1,\ldots,R_1$\footnote{Note that in the original problem the states of the MC are $[0,\infty)\times[0,R_1]$.}. 
 \begin{remark}
Theorem \ref{TH_dis} plays an important role in simplifying the original problem by reducing the two dimensional uncountable state MDP with continuous action space into a two dimensional countable state MDP with binary decision space. This significantly reduces the complexity of numerical methods such as VIA. However, as we shall see in Section \ref{sec:iid}, the absorbing MC framework helps prove the optimality of a class of simple-to-implement algorithms that is more suitable for resource-deficient EH devices. 
 \end{remark}

Since the class of policies that we are interested in does not observe the channel, but make a decision based only on $(b,\ m)$, the time of the decision is irrelevant.  Hence, given $(b,\ m)$, time $t$ and $t+1$ are stochastically identical. Therefore, in the rest of the paper we will omit the time index and optimize the scheduling decisions for any given state $(b,\ m)$. Define $\boldsymbol{\pi}^*$ as the optimal policy minimizing the mean time to absorption beginning at any given state $(b,\ m)$. Let $k^{\boldsymbol{\pi}^*}_{b,m}$ be the minimum mean time to absorption obtained by policy $\boldsymbol{\pi}^*$\footnote{Note that the mean time to absorption calculated in Lemma \ref{lemma1} is the smallest possible value, i.e., $k^{\boldsymbol{\pi}^*}_{b,R_1}=k_{b,R_1}$ for $b=0,1,\ldots, E_d-1$.}. Define the tail policy $\boldsymbol{\pi}^{i}(b,m) = (i,\boldsymbol{\pi}^*(\acute{b},\ \acute{m}))$, $i=0,1$ such that it chooses $\rho=i$ at state $(b,\ m)$ but follows policy $\boldsymbol{\pi}^*$ after transitioning into the new state $(\acute{b},\ \acute{m})$. Let $k^{\boldsymbol{\pi}^i}_{b,m}$ be the mean time to absorption of policy $\boldsymbol{\pi}^{i}(b,m)$, $i=0,1$. We can characterize $k^{\boldsymbol{\pi}^0}_{b,m}$ and $k^{\boldsymbol{\pi}^1}_{b,m}$ as follows:
\small
\begin{align}
k^{\boldsymbol{\pi}^0}_{b,m} &= 1 + \lambda k^{\boldsymbol{\pi}^*}_{b-1,R_1} + (1-\lambda)k^{\boldsymbol{\pi}^*}_{b-1,m+1},\label{k0}\\
k^{\boldsymbol{\pi}^1}_{b,m} &= 1 + \lambda k^{\boldsymbol{\pi}^*}_{b+e,m} + (1-\lambda)k^{\boldsymbol{\pi}^1}_{b,m}\nonumber\\
&=\frac{1}{\lambda} + k^{\boldsymbol{\pi}^*}_{b+e,m}\label{k1}.
\end{align}
\normalsize
Note that by evaluating and then comparing the values of $k^{\boldsymbol{\pi}^0}_{b,m}$ and $k^{\boldsymbol{\pi}^1}_{b,m}$, at all possible states $(b,\ m)$ for $b = 0,1,\ldots,\infty$, and $m = 0,1,\ldots,R_1$, one can obtain the optimal policy $\boldsymbol{\pi}^*$ and its associated $k^{\boldsymbol{\pi}^*}_{b,m}$.

\begin{theorem}
\label{lemma2}
For states $(b,\ m)=(E_d+j,\ R_1-j)$ for $j = 1,2,\ldots,R_1$, the minimum mean time to absorption, $k^{\boldsymbol{\pi}^*}_{b,m}$ is given by
\small
\begin{align}
k^{\boldsymbol{\pi}^*}_{E_d+j,R_1-j} = k^{\boldsymbol{\pi}^0}_{E_d+j,R_1-j} = \sum^{j}_{i=1} (1-\lambda)^{i-1}.
\end{align}
\normalsize
Furthermore, $k^{\boldsymbol{\pi}^*}_{b,R_1-j}=k^{\boldsymbol{\pi}^0}_{E_d+j,R_1-j}$ for $b=E_d+j+1, E_d+j+2,\ldots$.
\end{theorem}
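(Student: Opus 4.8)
The plan is to prove both assertions simultaneously by induction on $j$, using the \emph{strengthened} statement: for every $j\in\{1,\ldots,R_1\}$ and every $b\ge E_d+j$, the action $\rho=0$ is (weakly) optimal at state $(b,\,R_1-j)$ and
$k^{\boldsymbol{\pi}^*}_{b,R_1-j}=k^{\boldsymbol{\pi}^0}_{b,R_1-j}=\sum_{i=1}^{j}(1-\lambda)^{i-1}$.
The first equality in the theorem is then the case $b=E_d+j$, and the ``furthermore'' part is the remaining range $b\ge E_d+j+1$ (note this proves the slightly stronger fact that the value is the same constant for all $b\ge E_d+j$, which matches the claim since $k^{\boldsymbol{\pi}^0}_{E_d+j,R_1-j}=\sum_{i=1}^{j}(1-\lambda)^{i-1}$). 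At every such state we have $b\ge E_d+j\ge 2>1$ and $m=R_1-j<R_1$, so neither trivial case of \eqref{kBless1} applies; hence, combining Theorem \ref{TH_dis} with the Bellman optimality principle for the absorbing chain, $\boldsymbol{\pi}^*$ may be taken no-split and $k^{\boldsymbol{\pi}^*}_{b,R_1-j}=\min\{k^{\boldsymbol{\pi}^0}_{b,R_1-j},\,k^{\boldsymbol{\pi}^1}_{b,R_1-j}\}$ with $k^{\boldsymbol{\pi}^0},k^{\boldsymbol{\pi}^1}$ given by \eqref{k0}--\eqref{k1}.

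For the $\rho=0$ branch I would substitute $m=R_1-j$ into \eqref{k0} to get $k^{\boldsymbol{\pi}^0}_{b,R_1-j}=1+\lambda\,k^{\boldsymbol{\pi}^*}_{b-1,R_1}+(1-\lambda)\,k^{\boldsymbol{\pi}^*}_{b-1,R_1-j+1}$. Since $b-1\ge E_d+j-1\ge E_d$, the state $(b-1,R_1)$ is absorbing and contributes $0$; the state $(b-1,R_1-(j-1))$ satisfies $b-1\ge E_d+(j-1)$, so the induction hypothesis at index $j-1$ (the base reduction being that for $j=1$ the successor $(b-1,R_1)$ is already absorbing, i.e. an empty sum) gives $k^{\boldsymbol{\pi}^*}_{b-1,R_1-j+1}=\sum_{i=1}^{j-1}(1-\lambda)^{i-1}$. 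A one-line geometric reindexing then yields $k^{\boldsymbol{\pi}^0}_{b,R_1-j}=1+(1-\lambda)\sum_{i=1}^{j-1}(1-\lambda)^{i-1}=\sum_{i=1}^{j}(1-\lambda)^{i-1}$.

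For the $\rho=1$ branch, \eqref{k1} gives $k^{\boldsymbol{\pi}^1}_{b,R_1-j}=\frac{1}{\lambda}+k^{\boldsymbol{\pi}^*}_{b+e,R_1-j}\ge\frac{1}{\lambda}$, using only that the mean time to absorption is nonnegative. Since $\sum_{i=1}^{j}(1-\lambda)^{i-1}=\frac{1-(1-\lambda)^{j}}{\lambda}\le\frac{1}{\lambda}$, we obtain $k^{\boldsymbol{\pi}^0}_{b,R_1-j}\le k^{\boldsymbol{\pi}^1}_{b,R_1-j}$, so the minimum is attained at $\rho=0$ and equals $\sum_{i=1}^{j}(1-\lambda)^{i-1}$, completing the induction. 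I expect the only delicate point to be setting up the induction so that the $\rho=0$ recursion references exclusively states that are either absorbing or carry a strictly smaller index $j$ (hence already resolved), and observing that $k^{\boldsymbol{\pi}^1}$ is only ever needed as a lower bound — so no circular dependence on $k^{\boldsymbol{\pi}^*}$ at higher-battery states arises; everything else is the routine geometric-series manipulation above.
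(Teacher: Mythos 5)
Your proof is correct, and it follows the same skeleton as the paper's argument: induction on $j$, first-step analysis via \eqref{k0}--\eqref{k1}, and a comparison of the $\rho=0$ and $\rho=1$ branches after Theorem~\ref{TH_dis} restricts attention to no-splitting policies. Two details differ. First, you strengthen the induction hypothesis to all $b\ge E_d+j$ at once, so the ``furthermore'' claim falls out of the same induction; the paper treats $b=E_d+j$ first and appends a separate one-line computation for $b\ge E_d+j+1$. Second, and more substantively, your comparison of the two branches uses only nonnegativity of the absorption time together with the bound $\sum_{i=1}^{j}(1-\lambda)^{i-1}=\frac{1-(1-\lambda)^{j}}{\lambda}\le\frac{1}{\lambda}\le k^{\boldsymbol{\pi}^1}_{b,R_1-j}$, whereas the paper lower-bounds $k^{\boldsymbol{\pi}^1}_{E_d+n,R_1-n}$ using the monotonicity of the absorption time in $m$ (the fact $k_{b,m_1}\le k_{b,m_2}$ for $m_1\ge m_2$, asserted without proof in Appendix~\ref{proofTH_dis}) applied at the higher-battery state $(E_d+n+e,\,R_1-n+1)$ together with the induction hypothesis there. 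Your cruder bound is entirely sufficient for the value identity claimed in the theorem, it removes any reliance on the unproved monotonicity fact, and it makes explicit why no circular dependence on unresolved same-$j$, higher-$b$ states arises (you only ever need $k^{\boldsymbol{\pi}^1}$ as a lower bound); it even yields the strict inequality $k^{\boldsymbol{\pi}^0}_{b,R_1-j}<k^{\boldsymbol{\pi}^1}_{b,R_1-j}$ whenever $\lambda<1$, matching the strict optimality of $\rho=0$ that the paper's sharper estimate provides.
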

\begin{proof}
The proof is given in Appendix \ref{prooflemma2}.
\end{proof}
Theorem \ref{lemma2} states that if the receiver has $R_1-n$ bits of mutual information accumulated and more than $E_d+n$ units of energy in its battery, then it should use the incoming RF signal for mutual information accumulation only. For any given state $(b,\ m)$, we exploit Lemma \ref{lemma1} and Theorem \ref{lemma2} to develop Algorithm \ref{alg1} for calculating the minimum mean time to absorption, $k^{\boldsymbol{\pi}^*}_{b,m}$, and the optimal scheduling decision at every state.

The idea of Algorithm \ref{alg1} is to use Lemma \ref{lemma1} and Theorem \ref{lemma2} as boundary conditions and to recursively calculate the mean time to absorption $k^{\boldsymbol{\pi}^0}_{b,m}$ and $k^{\boldsymbol{\pi}^1}_{b,m}$ starting from $(b,\ m)=(E_d,R_1 -1)$. Note that $k^{\boldsymbol{\pi}^0}_{E_d,R_1 -1}$ and $k^{\boldsymbol{\pi}^1}_{E_d,R_1 -1}$ depend on the values of $k^{\boldsymbol{\pi}^*}_{E_d-1,R_1}$ and $k^{\boldsymbol{\pi}^*}_{E_d+1,R_1 -1}$, which are obtained in the initialization step, and the optimal scheduling decision at state $(E_d, R_1-1)$ is given by $\arg\min_{i\in{0,1}} k^{\boldsymbol{\pi}^i}_{b,m}$. The procedure in Algorithm \ref{alg1} continues by decrementing the value of $b$ by $1$ at each iteration, until $b=0$ at which time the value of $m$ is decremented by $1$, $b$ is initialized to $E_d+n$ and the procedure is repeated.  The aforementioned order of spanning the states of the MC ensures that at each iteration the mean time to absorption can be calculated from the values determined in the previous iterations. We have shown in Appendix \ref{proofalg1}, that Algorithm \ref{alg1} minimizes the expected number of re-transmissions starting from any state $(b,\ m)$.

\begin{algorithm}
\caption{Calculating the minimum mean time to absorption for an i.i.d. channel}\label{alg1}
\begin{algorithmic}[1]
\State Initialize $k^{\boldsymbol{\pi}^*}_{b,R_1}$ for $b = 0,\ldots,E_d-1$ using Lemma \ref{lemma1}.
\State Initialize $k^{\boldsymbol{\pi}^*}_{E_d+j,R_1-j}$ for $j = 1,\ldots,R_1$ using Theorem \ref{lemma2}.
\State $n\gets 0$
\For{$m = R_1-1:0$}
\For{$b = E_d+n:0$}
\State Calculate $k^{\boldsymbol{\pi}^0}_{b,m}$, $k^{\boldsymbol{\pi}^1}_{b,m}$ from (\ref{k0}) and (\ref{k1}), respectively.
\State $k^{\boldsymbol{\pi}^*}_{b,m}=\min\left(k^{\boldsymbol{\pi}^0}_{b,m}, k^{\boldsymbol{\pi}^1}_{b,m}\right)$.
\State $\rho^*(b,m) = \arg\min_i k^{\boldsymbol{\pi}^i}_{b,m}$ for $i = 0,1$
\EndFor
\State $n\gets n+1$
\EndFor
\end{algorithmic}
\end{algorithm}


\section{Optimal Class of Policies for i.i.d. Channels}
\label{sec:iid}

In the previous section, we have given a procedure to obtain the optimal scheduling decision of a no-splitting policy, once we established that there exists a no-splitting policy achieving the minimum number of re-transmissions. In this section, we formally determine the optimal class of scheduling policies minimizing the number of re-transmissions until successful decoding. In the following, we obtain our analytical results for $e=1$ and $R_0=1$. However, our analysis holds in general for different values of $e$ and $R_0$, as demonstrated by the numerical results presented in Section \ref{Results}. Note that once the battery has sufficient charge to decode the packet, i.e., $b=E_d+1,E_d+2,\ldots$, it is better to use the incoming RF signal only for information accumulation.
For the remaining states, i.e., $b=1,2,\ldots, E_d$, and $m=0,1,\ldots R_1-1$,  any scheduling decision, either $\rho=0$ or $\rho=1$, is optimal. These two facts are proven formally in Appendix \ref{proofTH_str1} and \ref{proofTH_str2} respectively. This result, in essence, proves that there is no unique optimal policy. Instead, there exists a family of optimal policies achieving the same minimum mean time to absorption. We summarize our findings so far in the following theorem by formally characterizing the family of optimal policies.

\begin{theorem}
\label{TH_str}
Optimal policy, $\boldsymbol{\pi}^*$, satisfies the following properties.
\begin{enumerate}
\item If $b=0$ or $m=R_1$, it chooses $\rho=1$.
\item If $b=E_d+1,E_d+2,\ldots$, it chooses $\rho=0$.
\item If $b=1,2,\ldots, E_d$ and $m=0,1,\ldots R_1-1$, chooses either $\rho=0$ or $\rho=1$.
\end{enumerate}
\end{theorem}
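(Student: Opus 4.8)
The plan is to combine Theorem~\ref{TH_dis}, which already reduces attention to no-splitting policies, with a state-by-state comparison of the only two surviving candidates $\boldsymbol{\pi}^0(b,m)$ and $\boldsymbol{\pi}^1(b,m)$ through the recursions \eqref{k0}--\eqref{k1}; throughout I take $e=R_0=1$ as in the statement. Property~(1) needs essentially no new argument: if $b=0$ the transceiver cannot be powered for an accumulation event, so $\rho=1$ is the only feasible action; and if $m=R_1$ the receiver already holds enough mutual information, so under $\rho=0$ the state merely moves from $(b,R_1)$ to $(b-1,R_1)$, burning one energy unit to no effect, hence the harvest-only continuation is optimal and its mean time to absorption is the one already recorded in \eqref{kBless1} and Lemma~\ref{lemma1}. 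The substance is therefore in Properties~(2) and~(3), and I would establish both by computing $k^{\boldsymbol{\pi}^0}_{b,m}-k^{\boldsymbol{\pi}^1}_{b,m}$ while visiting the states in the order used by Algorithm~\ref{alg1} (decreasing $m$; for each $m$, decreasing $b$), so that every quantity on the right-hand side of \eqref{k0}--\eqref{k1} has already been fixed by Lemma~\ref{lemma1}, Theorem~\ref{lemma2}, or an earlier step of the sweep.

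For Property~(2), fix $m$ and write $j=R_1-m$. For $b\geq E_d+j$ the claim is exactly Theorem~\ref{lemma2}. For the transition band $E_d+1\leq b<E_d+j$ I would induct downward on $b$: since $b-1\geq E_d$, the GOOD-channel branch of $\rho=0$ lands in an absorbing state, so \eqref{k0} collapses to $k^{\boldsymbol{\pi}^0}_{b,m}=1+(1-\lambda)k^{\boldsymbol{\pi}^*}_{b-1,m+1}$, whereas \eqref{k1} gives $k^{\boldsymbol{\pi}^1}_{b,m}=\tfrac1\lambda+k^{\boldsymbol{\pi}^*}_{b+1,m}$. Substituting the known value of $k^{\boldsymbol{\pi}^*}_{b+1,m}$ (the inductive hypothesis, or the Theorem~\ref{lemma2} value once $b+1$ reaches the diagonal), the difference $k^{\boldsymbol{\pi}^0}_{b,m}-k^{\boldsymbol{\pi}^1}_{b,m}$ reduces --- via the recurring cancellation $\tfrac{1-\lambda}{\lambda}-\tfrac1\lambda=-1$ --- to a strictly negative number, so $\rho=0$ is strictly preferable for every $b\geq E_d+1$.

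For Property~(3) --- the crux --- I would show $k^{\boldsymbol{\pi}^0}_{b,m}=k^{\boldsymbol{\pi}^1}_{b,m}$ for $1\leq b\leq E_d$, $0\leq m\leq R_1-1$ by exhibiting a closed form for $k^{\boldsymbol{\pi}^*}$ on this block and verifying it inductively. The working ansatz is $k^{\boldsymbol{\pi}^*}_{b,m}=\tfrac{(E_d+R_1-m)-b}{\lambda}+c_{R_1-m}$ with offsets $c_1=1$ and $c_j=(1-\lambda)c_{j-1}+2-j$. The boundary data launching the induction are the $m+1=R_1$ column (where \eqref{k0} degenerates to $1+k^{\boldsymbol{\pi}^*}_{b-1,R_1}$ and Lemma~\ref{lemma1} supplies the last term), the $b=0$ edge (through \eqref{kBless1}), and the $b=E_d+1$ edge (from Property~(2)). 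Substituting the ansatz into \eqref{k1} reproduces it with $b\mapsto b+1$ and $+\tfrac1\lambda$, hence is automatic; substituting it into \eqref{k0}, after replacing $k^{\boldsymbol{\pi}^*}_{b-1,R_1}$ by its Lemma~\ref{lemma1} value $\tfrac{E_d-b+1}{\lambda}$ and $k^{\boldsymbol{\pi}^*}_{b-1,m+1}$ by the ansatz, the $\lambda$-weighted terms cancel and leave precisely $\tfrac{(E_d+R_1-m)-b}{\lambda}+c_{R_1-m}$, which is also the value returned by \eqref{k1}. This proves the equality, so both $\rho=0$ and $\rho=1$ are optimal on the whole block.

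The step I expect to be the real obstacle is the interface bookkeeping shared by (2) and (3): the ``boundary'' feeding the recursion at $b=E_d+1$ is not the clean diagonal value $\sum_i(1-\lambda)^{i-1}$ of Theorem~\ref{lemma2} but the transition-band value produced inside the proof of (2), so the two inductions must be interleaved (or the transition-band expressions carried along as additional inductive data), and the $b=0$ edge couples back into the block through \eqref{kBless1}. Once the sweep is organized so that each neighbour $(b-1,R_1)$, $(b-1,m+1)$, $(b+1,m)$ invoked by \eqref{k0}--\eqref{k1} is already resolved, what remains is a finite chain of the $\tfrac{1-\lambda}{\lambda}-\tfrac1\lambda=-1$ cancellations. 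Assembling (1), (2), (3) with Theorem~\ref{TH_dis} then gives the stated description of $\boldsymbol{\pi}^*$; in particular, since $\rho=0$ and $\rho=1$ tie over the whole block $1\leq b\leq E_d$, $0\leq m\leq R_1-1$, the optimal policy is not unique and harvest-first-store-later is one convenient representative.
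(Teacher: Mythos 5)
Your proposal is correct and follows essentially the same route as the paper: Property (1) by the same direct argument, Property (2) by a downward induction off the Theorem~\ref{lemma2} diagonal using the $\tfrac{1-\lambda}{\lambda}-\tfrac{1}{\lambda}=-1$ cancellation (the paper's Appendix~E), and Property (3) by showing $k^{\boldsymbol{\pi}^0}_{b,m}=k^{\boldsymbol{\pi}^1}_{b,m}$ on the block $1\leq b\leq E_d$, $0\leq m\leq R_1-1$ by induction (the paper's Appendix~F), all resting on Theorem~\ref{TH_dis}, Lemma~\ref{lemma1} and the recursions \eqref{k0}--\eqref{k1}. The only difference is presentational: you organize the two inductions as a single Algorithm~\ref{alg1}-order sweep and package Appendix~F's values as the closed form $\tfrac{(E_d+R_1-m)-b}{\lambda}+c_{R_1-m}$ with $c_1=1$, $c_j=(1-\lambda)c_{j-1}+2-j$, which is indeed consistent with the paper's expressions (e.g., it reproduces $1+\tfrac{E_d-b+1}{\lambda}$ at $m=R_1-1$), and you correctly flag the $b=E_d+1$ interface as the bookkeeping that must be carried along.
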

\begin{proof}
The proof is given in Appendix \ref{proofTH_str}.
\end{proof}

Simple examples of such optimal policies that belong to the optimal family of policies characterized in Theorem \ref{TH_str}, are:

\begin{itemize}
\item Battery First (BF): the receiver harvests energy until it acquires $E_d$ units of energy and then starts accumulating the mutual information.
\item Information First (IF): the receiver always accumulates mutual information 
unless $b=0$ or $m=R_1$.
\item Coin Toss (CT): the receiver harvests energy when $b=0$ or $m=R_1$, while it accumulates mutual information when $b=E_d+1,E_d+2,\ldots$. Otherwise, it tosses a fair coin to choose between harvesting energy or accumulating mutual information.
\end{itemize}



\section{Expected Number of Re-Transmissions for a Correlated Channel}
\label{sec:cor}
In many wireless systems, the wireless channel cannot be modeled as an i.i.d. channel. In this section, we investigate optimal
scheduling policies under a time-correlated channel model. Our analysis for a correlated channel follows a similar approach to our analysis for i.i.d. channels.  However, due to correlation between the subsequent channel states, the receiver can improve its decision by incorporating its knowledge of the current state.  
Let the transition probabilities of the channel states be $\mathds{Pr}\left[G_t=1|G_{t-1}=1\right]=\lambda_1$ and $\mathds{Pr}\left[G_t=1|G_{t-1}=0\right]=\lambda_0$. Note that due to time correlation, the previous state of the channel provides information about the current channel state to the receiver. Hence, although once again we model the system as a MC, this time the state space of MC is extended where the states are $(b,\ m,\ G)$  with $G$ being the previous state of the channel\footnote{Note that the receiver  becomes aware of the channel state after it decides to sample the incoming RF signal.}. The resulting MC is still an absorbing MC, and the mean time to absorption is equivalent to the minimum expected number of re-transmissions until successful decoding. Define $\boldsymbol{\pi}^*$ as the optimal policy minimizing the mean time to absorption at any given state $(b,\ m,\ G)$. Let $k^{\boldsymbol{\pi}^*}_{b,m,G}$ be the mean time to absorption obtained by policy $\boldsymbol{\pi}^*$ at state $(b,\ m,\ G)$. 

\begin{lemma}
\label{cor1}
For any $E_d-i\cdot e \leq b < E_d-(i-1)\cdot e$ such that $i=1,\ldots,E_d$, and given that $m=R_1$, the minimum mean time to absorption is given by 
\small
\begin{align}
&k^{\boldsymbol{\pi}^*}_{b,R_1,1} =i \frac{1+\lambda_0-\lambda_1}{\lambda_0}, &i=1,\ldots,E_d,\label{eq:alg1}\\
&k^{\boldsymbol{\pi}^*}_{b,R_1,0} =\frac{1}{\lambda_0} + (i-1) \frac{1+\lambda_0-\lambda_1}{\lambda_0},  &i=1,\ldots,E_d.\label{eq:alg2}
\end{align}
\normalsize
\end{lemma}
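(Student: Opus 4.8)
The plan is to mirror the proof of Lemma~\ref{lemma1}, now carrying the extra ``previous channel state'' coordinate $G$. The first thing I would establish is that on the set $\{m=R_1\}$ there is nothing to gain from accumulating more mutual information, so, invoking the no-splitting property that (as noted in the contributions) continues to hold for correlated channels, we may restrict attention to $\rho\in\{0,1\}$; among these, $\rho=1$ is optimal at every state $(b,R_1,G)$ with $b<E_d$, since pure harvesting consumes no energy (the indicator $\mathds{1}_{\rho(t)\neq 1}$ vanishes) whereas $\rho=0$ would burn a battery unit and move the state further from the absorbing region $\{b\geq E_d\}$. Consequently the chain restricted to $\{m=R_1\}$ is a pure-harvesting walk on the battery axis augmented by $G$, and $k^{\boldsymbol{\pi}^*}_{b,R_1,G}$ can be computed by first-step analysis.

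Next I would write the coupled first-step equations, conditioning the current slot on being GOOD (probability $\lambda_G$, adds $e$ units, new third coordinate $1$) or BAD (probability $1-\lambda_G$, battery unchanged, new third coordinate $0$):
\begin{align}
k^{\boldsymbol{\pi}^*}_{b,R_1,1} &= 1 + \lambda_1 k^{\boldsymbol{\pi}^*}_{b+e,R_1,1} + (1-\lambda_1)\,k^{\boldsymbol{\pi}^*}_{b,R_1,0}, \\
k^{\boldsymbol{\pi}^*}_{b,R_1,0} &= 1 + \lambda_0 k^{\boldsymbol{\pi}^*}_{b+e,R_1,1} + (1-\lambda_0)\,k^{\boldsymbol{\pi}^*}_{b,R_1,0},
\end{align}
for $b<E_d$, with the absorbing boundary $k^{\boldsymbol{\pi}^*}_{b,R_1,G}=0$ whenever $b\geq E_d$. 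Solving the second relation gives $k^{\boldsymbol{\pi}^*}_{b,R_1,0}=\frac{1}{\lambda_0}+k^{\boldsymbol{\pi}^*}_{b+e,R_1,1}$; substituting this into the first collapses it to the single one-step recursion
\begin{align}
k^{\boldsymbol{\pi}^*}_{b,R_1,1}=k^{\boldsymbol{\pi}^*}_{b+e,R_1,1}+\frac{1+\lambda_0-\lambda_1}{\lambda_0}.
\end{align}

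The lemma then follows by induction on the band index $i$ (equivalently, decreasing $b$). For $i=1$, i.e.\ $E_d-e\leq b<E_d$, we have $b+e\geq E_d$, so the recursion gives $k^{\boldsymbol{\pi}^*}_{b,R_1,1}=\frac{1+\lambda_0-\lambda_1}{\lambda_0}$ and hence $k^{\boldsymbol{\pi}^*}_{b,R_1,0}=\frac{1}{\lambda_0}$, matching \eqref{eq:alg1}--\eqref{eq:alg2}. For the inductive step, if $E_d-i e\leq b<E_d-(i-1)e$ then $E_d-(i-1)e\leq b+e<E_d-(i-2)e$, so $b+e$ lies in band $i-1$; plugging $k^{\boldsymbol{\pi}^*}_{b+e,R_1,1}=(i-1)\frac{1+\lambda_0-\lambda_1}{\lambda_0}$ into the recursion yields $k^{\boldsymbol{\pi}^*}_{b,R_1,1}=i\frac{1+\lambda_0-\lambda_1}{\lambda_0}$, and feeding this back through $k^{\boldsymbol{\pi}^*}_{b,R_1,0}=\frac{1}{\lambda_0}+k^{\boldsymbol{\pi}^*}_{b+e,R_1,1}$ produces \eqref{eq:alg2}.

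The algebra is routine; the two points needing care are (i) the optimality claim that $\rho=1$ dominates at every $(b,R_1,G)$ --- the only genuine optimization step, which leans on the correlated-channel no-splitting result plus the observation that wasting a unit of energy cannot shorten the remaining hitting time when $m=R_1$ --- and (ii) checking that the band structure is preserved by a successful harvest, i.e.\ that $b+e$ always lands in band $i-1$ no matter where $b$ sits inside band $i$ (this uses $e\geq 1$), since that is exactly what makes $k^{\boldsymbol{\pi}^*}_{b,R_1,1}$ depend on $b$ only through $i$. Finiteness of the hitting times is automatic since $\lambda_0>0$ guarantees absorption with probability one. I expect point (i) to be the main obstacle, as everything else is bookkeeping.
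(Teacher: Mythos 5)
Your proof is correct and follows essentially the same route as the paper's (which simply mirrors the induction of Lemma~\ref{lemma1} with first-step analysis, now with the coupled pair of equations in $G$ that you decouple via $k^{\boldsymbol{\pi}^*}_{b,R_1,0}=\frac{1}{\lambda_0}+k^{\boldsymbol{\pi}^*}_{b+e,R_1,1}$). One small caution on ordering: in the paper the no-splitting result for correlated channels is derived \emph{from} this lemma, so invoking it here is circular as stated; fortunately your own domination argument (any $\rho<1$ at $m=R_1$ burns an energy unit on useless information and harvests no more than $\rho e\leq e$) already rules out every $\rho\in[0,1)$ directly, so the appeal to the no-splitting theorem can simply be dropped.
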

\begin{proof}
The proof is similar to that of Lemma \ref{lemma1}. The detailed proof is given in \cite{arxivmehdi}.
\end{proof}

Similar to Theorem \ref{TH_dis}, by exploiting  Lemma \ref{cor1}, we can prove that the optimal policy should either choose energy harvesting or information accumulation at any given state $(b,\ m,\ G)$. Therefore, MC associated with the optimal strategy has discrete states in which $b = 0,1,\ldots,\infty$, $m = 0,1,\ldots,R_1$ and $G = 0,1$. Define the tail policy $\boldsymbol{\pi}^{i}(b,m,G) = (i,\boldsymbol{\pi}^*(\acute{b},\ \acute{m},\ \acute{G}))$, $i=0,1$ that chooses $\rho=i$ at state $(b,\ m,\ G)$ but follows policy $\boldsymbol{\pi}^*$ after transitioning into the new state $(\acute{b},\ \acute{m},\ \acute{G})$. Let $k^{\boldsymbol{\pi}^i}_{b,m,G}$ be the mean time to absorption of policy $\boldsymbol{\pi}^{i}(b,m,G)$, $i=0,1$. We can calculate $k^{\boldsymbol{\pi}^0}_{b,m,G}$ and $k^{\boldsymbol{\pi}^1}_{b,m,G}$ as follows:
\small
\begin{align}
&k^{\boldsymbol{\pi}^0}_{b,m,0}= 1+ \lambda_0 k^{\boldsymbol{\pi}^*}_{b-1,R_1,1} + (1-\lambda_0)k^{\boldsymbol{\pi}^*}_{b-1,m+1,0}\label{inf0},\\
&k^{\boldsymbol{\pi}^0}_{b,m,1}= 1+ \lambda_1 k^{\boldsymbol{\pi}^*}_{b-1,R_1,1} + (1-\lambda_1)k^{\boldsymbol{\pi}^*}_{b-1,m+1,0}\label{inf1},\\
&k^{\boldsymbol{\pi}^1}_{b,m,0}= 1+ \lambda_0 k^{\boldsymbol{\pi}^*}_{b+1,m,1} + (1-\lambda_0)k^{\boldsymbol{\pi}^1}_{b,m,0}=\frac{1}{\lambda_0} + k^{\boldsymbol{\pi}^*}_{b+1,m,1},\label{eh0}\\
&k^{\boldsymbol{\pi}^1}_{b,m,1}= 1+ \lambda_1 k^{\boldsymbol{\pi}^*}_{b+1,m,1} + (1-\lambda_1)k^{\boldsymbol{\pi}^*}_{b,m,0}.\label{eh1}
\end{align}
\normalsize

Similar to the outline of the Theorem \ref{lemma2}, in the following, we consider states $(b,\ m,\ G)=(E_d+j,R_1-j,G)$ for $j=1,\ldots,R_1$ and derive the optimal strategy for those states. 

\begin{lemma}
\label{cor-lema}
The optimal strategy in states $(E_d+j,R_1-j,G)$ for $j=1,\ldots,R_1$ and $G = 0,1$ is to accumulate mutual information ($\rho^*(E_d+j,R_1-j,G) = 0$) and also $k^{\boldsymbol{\pi}^*}_{b,R_1-j,G} =k^{\boldsymbol{\pi}^0}_{E_d+j,R_1-j,G}$ for $b= E_d+j+1,E_d+j+2,\ldots$.

\end{lemma}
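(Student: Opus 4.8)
The plan is to follow the outline of the proof of Theorem~\ref{lemma2}, now carrying the extra previous-channel coordinate $G$. Because, by the correlated-channel counterpart of Theorem~\ref{TH_dis}, an optimal policy never splits the RF signal, it suffices to compare the two tail policies of \eqref{inf0}--\eqref{eh1}. I would prove both assertions of the lemma simultaneously, by induction on $j=1,\ldots,R_1$, through the statement $H(j)$: \emph{for every $b\ge E_d+j$ and $G\in\{0,1\}$ we have $\rho^*(b,R_1-j,G)=0$ and $k^{\boldsymbol{\pi}^*}_{b,R_1-j,G}=\kappa_j^G$}, where $\kappa_j^0=1+(1-\lambda_0)\kappa_{j-1}^0=\frac{1-(1-\lambda_0)^{j}}{\lambda_0}$, $\kappa_j^1=1+(1-\lambda_1)\kappa_{j-1}^0$, and $\kappa_0^0:=0$ (the absorbing value). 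Since $\kappa_j^G$ is independent of $b$, $H(j)$ already subsumes the ``furthermore'' clause $k^{\boldsymbol{\pi}^*}_{b,R_1-j,G}=k^{\boldsymbol{\pi}^0}_{E_d+j,R_1-j,G}$ for $b>E_d+j$.

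The ``accumulate'' direction is the easy one. For $b\ge E_d+j$, the $\rho=0$ transition of \eqref{inf0}--\eqref{inf1} moves the chain in a GOOD slot to $(b-1,R_1,1)$, which is absorbing because $b-1\ge E_d+j-1\ge E_d$, and in a BAD slot to $(b-1,R_1-(j-1),0)$ with $b-1\ge E_d+(j-1)$; applying $H(j-1)$ to the latter state (and, for $j=1$, using directly that $(b-1,R_1,0)$ is absorbing, consistent with $\kappa_0^0=0$) gives $k^{\boldsymbol{\pi}^0}_{b,R_1-j,G}=1+(1-\lambda_G)\kappa_{j-1}^0=\kappa_j^G$, independent of $b$. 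It then remains to show harvesting is no better. For $G=0$, \eqref{eh0} gives $k^{\boldsymbol{\pi}^1}_{b,R_1-j,0}=\frac1{\lambda_0}+k^{\boldsymbol{\pi}^*}_{b+1,R_1-j,1}\ge\frac1{\lambda_0}+1$, since $(b+1,R_1-j,1)$ is transient ($R_1-j<R_1$); as $\kappa_j^0\le\frac1{\lambda_0}$ is immediate from its closed form, this exceeds $\kappa_j^0$, so $\rho^*(b,R_1-j,0)=0$ and $k^{\boldsymbol{\pi}^*}_{b,R_1-j,0}=\kappa_j^0$ for all $b\ge E_d+j$. Feeding that into \eqref{eh1} and again using $k^{\boldsymbol{\pi}^*}_{b+1,R_1-j,1}\ge1$ yields $k^{\boldsymbol{\pi}^1}_{b,R_1-j,1}\ge1+\lambda_1+(1-\lambda_1)\kappa_j^0\ge1+(1-\lambda_1)\kappa_{j-1}^0=\kappa_j^1$, where the last step uses $\kappa_j^0\ge\kappa_{j-1}^0$ (true because $\kappa_j^0-\kappa_{j-1}^0=1-\lambda_0\kappa_{j-1}^0\ge0$); hence $\rho^*(b,R_1-j,1)=0$ and $k^{\boldsymbol{\pi}^*}_{b,R_1-j,1}=\kappa_j^1$, completing $H(j)$. (In the base case $j=1$ the two harvesting values are simply $\ge\frac1{\lambda_0}+1>1$ and $\ge 2>1$.)

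The main obstacle, and the reason the lemma must be phrased ``for all $b\ge E_d+j$'' rather than just at $b=E_d+j$, is the self-reference in the harvesting equations: by \eqref{eh0}, $k^{\boldsymbol{\pi}^1}_{b,R_1-j,0}$ depends on the optimal value at the \emph{same} information level $R_1-j$ but a \emph{larger} battery, and the GOOD branch of \eqref{eh1} loops back into that same level, so the states of level $j$ cannot be resolved one battery value at a time. The way around it is that the \emph{exact} value $k^{\boldsymbol{\pi}^*}_{b+1,R_1-j,1}$ is never needed — the universal bound ``the mean time to absorption from any transient state is at least $1$'' already separates the harvesting value from $k^{\boldsymbol{\pi}^0}=\kappa_j^G$ — so the entire set of states $\{(b,R_1-j,G):b\ge E_d+j\}$ can be disposed of in one stroke; the only remaining work is then the two elementary comparisons $\kappa_j^0\le 1/\lambda_0$ and $\kappa_j^0\ge\kappa_{j-1}^0$.
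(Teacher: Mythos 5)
Your proposal is correct, and your closed forms $\kappa_j^0=\sum_{i=1}^{j}(1-\lambda_0)^{i-1}$ and $\kappa_j^1=1+(1-\lambda_1)\kappa_{j-1}^0$ reproduce exactly \eqref{eq:alg3}--\eqref{eq:alg5}. The paper only sketches this lemma, stating that the argument mirrors Theorem~\ref{lemma2}: an induction on $j$ in which the harvesting action is dominated by combining (strict) monotonicity of the mean absorption time in the accumulated information $m$ with the $b$-independence of the level-$(j-1)$ values, exactly as in the i.i.d.\ proof of Theorem~\ref{lemma2}. You keep that induction skeleton but close the comparison by a different device: you never invoke monotonicity in $m$, nor do you need the exact optimal value of the same-level, larger-battery state that appears in \eqref{eh0} and \eqref{eh1}; instead the universal bound ``any transient state needs at least one slot to absorb,'' together with the two scalar facts $\kappa_j^0\le 1/\lambda_0$ and $\kappa_j^0\ge\kappa_{j-1}^0$, suffices, and your ordering (settle $G=0$ for all $b\ge E_d+j$ first, then $G=1$) cleanly breaks the coupling created by the $(1-\lambda_1)k^{\boldsymbol{\pi}^*}_{b,m,0}$ term in \eqref{eh1}. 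This makes your argument more self-contained than the paper's (which leans on the unproved monotonicity-in-$m$ claim also used in Appendix~\ref{proofTH_dis}), at the cost of being tied to the explicit closed forms; strengthening the induction hypothesis to all $b\ge E_d+j$, as you do, is also how the ``furthermore'' clause is absorbed in the paper's treatment. One small remark: at $G=1$ you conclude only ``$\ge$,'' but your own bound has slack at least $\lambda_1+(1-\lambda_1)(1-\lambda_0)^{j-1}>0$ for nondegenerate $\lambda_0,\lambda_1$, so the inequality is strict; it is worth saying so, since that is what makes $\rho=0$ \emph{the} optimal action at these states rather than merely an optimal one.
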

\begin{proof}
The proof is similar to that of Theorem \ref{lemma2}. The detailed proof is given in \cite{arxivmehdi}.
\end{proof}
Now that we know the optimal policy for states $(E_d+j,R_1-j,G)$, we can calculate the minimum mean time to absorption for those states as follows:
\small
\begin{align}
&k^{\boldsymbol{\pi}^*}_{E_d+j,R_1-j,0} =k^{\boldsymbol{\pi}^0}_{E_d+j,R_1-j,0}= \sum^{j}_{i=1}(1-\lambda_0)^{i-1},\hspace{0.5cm} j=1,\ldots,R_1,\label{eq:alg3}\\
&k^{\boldsymbol{\pi}^*}_{E_d+j,R_1-j,1} =k^{\boldsymbol{\pi}^0}_{E_d+j,R_1-j,1}=1+ (1-\lambda_1)\sum^{j-1}_{i=1}(1-\lambda_0)^{i-1},\nonumber\\ &\hspace{6cm} j=2,\ldots,R_1,\label{eq:alg4}\\
&k^{\boldsymbol{\pi}^*}_{E_d+j,R_1-j,1} = 1,\hspace{3.65cm} j=1.\label{eq:alg5}
\end{align}
\normalsize

Algorithm \ref{alg2} calculates the $k^{\boldsymbol{\pi}^*}_{b,m,G}$ and the corresponding $\rho^*$ for any $b$, $m$, and $G$. Proving the optimality of Algorithm \ref{alg2} is similar to the outline of the optimality proof of Algorithm \ref{alg1} and hence it is omitted here. Note that the knowledge of the previous channel state, $G$, enables the receiver to be able to fully utilize the information yielded by the correlation. However, it also results in four coupled equations, (\ref{inf0})-(\ref{eh1}), over numerous states which makes the analysis extremely hard. For this reason, we omit the full characterization of the structure of the optimal policy.   Nevertheless, note that Algorithm \ref{alg2} provides a recursive method to determine the optimal scheduling decisions for each state $(b,m, G)$.  In fact, we use these optimal decisions in the numerical experiments discussed in Section \ref{Results} to calculate the minimum number of re-transmissions.

\begin{algorithm}
\caption{Calculating the minimum mean time to absorption for correlated channel}\label{alg2}
\begin{algorithmic}[1]
\State Initialize $k^{\boldsymbol{\pi}^*}_{b,R_1,G}$ for $b = 0,\ldots,E_d-1$ using (\ref{eq:alg1}) and (\ref{eq:alg2}).
\State Initialize $k^{\boldsymbol{\pi}^*}_{E_d+j,R_1-j,G}$ for $j = 1,\ldots,R_1$ using (\ref{eq:alg3}), (\ref{eq:alg4}) and (\ref{eq:alg5}).
\State $n\gets 0$
\For{$m = R_1-1:0$}
\For{$b = E_d+n:0$}
\State Calculate $k^{\boldsymbol{\pi}^0}_{b,m,G}$ for $G=0,1$ using (\ref{inf0}) and (\ref{inf1}), respectively.
\State Calculate $k^{\boldsymbol{\pi}^1}_{b,m,G}$ for $G=0,1$ using (\ref{eh0}) and (\ref{eh1}), respectively.
\State $k^{\boldsymbol{\pi}^*}_{b,m,G}=\min\left(k^{\boldsymbol{\pi}^0}_{b,m,G}, k^{\boldsymbol{\pi}^1}_{b,m,G}\right)$.
\State $\rho^*(b,m,G) = \arg\min_i k^{\boldsymbol{\pi}^i}_{b,m,G}$ for $i = 0,1$
\EndFor
\State $n\gets n+1$
\EndFor
\end{algorithmic}
\end{algorithm}

\section{Numerical Results}\label{Results}

In this section, we provide numerical evidence to support the analytical results established in the paper. VIA is a standard tool for solving the bellman equations in (\ref{actionvalue}). However, VIA iterates for numerous passes over each state, which is increasing in $\beta$, before converging to a steady solution, whereas Algorithm \ref{alg1} and \ref{alg2} needs a single iteration. Moreover, VIA  achieves exactly the same performance as Algorithm \ref{alg1} and \ref{alg2}. Thus, we omit the results obtained by VIA.

We will divide our attention to validate the optimal policy for i.i.d. and correlated channel  models. Although the framework discussed is sufficiently general to determine the number of re-transmissions starting from any residual battery level, in this section for the clarity of presentation, we consider that the initial battery level is zero. We use a simple ARQ mechanism as a baseline for understanding the performance merits of the HARQ mechanism. In the following, we formally define the simple ARQ scheme for i.i.d. and correlated channels.

\subsection{Simple ARQ}\label{simple_harq}
In simple ARQ, the packet is transmitted successfully whenever the channel is in a GOOD state and the receiver has sufficient energy to decode the packet. Otherwise, the receiver drops the packet and awaits re-transmissions. When the receiver employs simple ARQ, before any decoding attempt, it has to make sure that its battery has at least $E_d+1$ units of energy. Otherwise, after consuming 1 unit of energy for sampling, it will not have sufficient energy to decode the data packet and it will drop the packet. It is easy to prove that the optimal simple ARQ policy minimizing the mean time to absorption first harvests $E_d+1$ units of energy and then attempts decoding. If the decoding attempt is not successful, it harvests energy until its battery state reaches $E_d+1$ units again before attempting to decode. 



\subsection{i.i.d. Channel States}
\label{num_iid}
In this section, we evaluate the minimum mean time to absorption obtained from Algorithm 1, and compare it to that of the following three simple policies. The studied policies are as follows: $i)$ Battery First (BF),  $ii)$ Information First (IF), and $iii)$ Coin Toss (CT). Also, we compare the performance of the receiver equipped with HARQ mechanism with the case of a receiver equipped with simple ARQ mechanism. We determine the mean number of re-transmissions by Monte Carlo simulations, and compare them with that of analytical calculation described in Algorithm 1.  Note that Monte Carlo simulations provide only \emph{sample} mean time which is a random variable. The  mean of this random variable is equal to the mean time to absorption and its variance decreases with the number of samples and becomes zero only if the number of iterations go to infinity. Hence, we expect to see small differences between the results obtained by the Monte Carlo simulations and analytical results, which is the reason why some policies have slightly smaller mean time to absorption than the optimal analytical value.

Table \ref{son} summarizes the mean time to absorption for $R_1 = 10$, $e=1$, $\lambda=0.5$ and $E_d=5$ with respect to $R_0$ associated with different policies. For IF, BF, CT and simple ARQ policies, we run Monte Carlo simulations for $10^7$ iterations and evaluate the sample mean. It can be seen from Table \ref{son} that all policies have almost the same performance. This observation confirms our major finding that the optimal policy achieving the minimum mean time to absorption is not unique. 

The effect of quality of the channel on the mean time to absorption for $R_0 = 5$, $R_1=10$, $E_d=5$ and $e=2$ with respect to $\lambda$ is summarized in Table \ref{son_vs_lamda}. As expected, it can be seen that the mean time to absorption decreases as the channel quality improves. Also, the performance gap between the HARQ and simple ARQ mechanism becomes smaller as the channel quality improves. This is because as the channel quality improves, the probability of harvesting energy and accumulating $R_1$ bits of mutual information also increases. Finally, the mean time to absorption for $R_0 = 5$, $R_1=10$, $E_d=10$ and $\lambda=0.3$ with respect to $e$ is summarized in Table \ref{son_vs_l}. We observe that the mean time to absorption is approximately the same for all policies and it is decreasing with respect to the amount of harvested energy, $e$.

The results presented in Table \ref{son}, \ref{son_vs_lamda} and \ref{son_vs_l} confirm our theoretical results that, indeed, the optimal policy harvests energy whenever $b = 0$ or $m=R_1$ and accumulates mutual information whenever $b>E_d$. For the rest of the states it does not matter what the receiver does, as long as, it does not split the received RF signal.

\begin{table*}
\centering
\caption{Mean time to absorption for $R_1 = 10$, $e=1$ and $E_d=5$ vs. $R_0$}
\label{son}
\begin{tabular}{lllllllllll}
               & $R_0=1$  & $R_0=2$  & $R_0=3$  & $R_0=4$  & $R_0=5$  & $R_0=6$   & $R_0=7$   & $R_0=8$   & $R_0=9$     \\
Optimal analytical & 15.9941 & 15.8125 & 15.6250 & 15.2500 & 14.5000 & 14.5000 & 14.5000 & 14.5000 & 14.5000  \\
Optimal Monte-Carlo & 15.9910 & 15.8103 & 15.6235 & 15.2490 & 14.4992 & 14.5001 & 14.4998 & 14.5000 & 14.4983  \\
BF             & 15.9938 & 15.8116 & 15.6259 & 15.2504 & 14.4999 & 14.4995 & 14.4993 & 14.5012 & 14.5000  \\
IF             & 15.9941 & 15.8143 & 15.6245 & 15.2508 & 14.4987 & 14.4997 & 14.5017 & 14.4989 & 14.5003  \\
CT             & 15.9966 & 15.8140 & 15.6266 & 15.2491 & 14.5020 & 14.5007 & 14.5009 & 14.4984 & 14.5001 \\
Simple ARQ     & 15.9992 & 15.9992 & 15.9992 & 16.0006 & 16.0007 & 15.9995 & 15.9996 & 16.0008 & 16.0011 
\end{tabular}
\end{table*}

\begin{table*}
\centering
\caption{Mean time to absorption for $R_1 = 10$, $R_0=5$, $e=2$ and $E_d=5$ vs. $\lambda$}
\label{son_vs_lamda}
\begin{tabular}{lllllllllll}
               & $\lambda=0.1$  & $\lambda=0.2$  & $\lambda=0.3$  & $\lambda=0.4$  & $\lambda=0.5$  & $\lambda=0.6$   & $\lambda=0.7$   & $\lambda=0.8$   & $\lambda=0.9$     \\
Optimal analytical & 40.9000 & 20.8000 & 14.0333 & 10.6000 & 8.5000 & 7.0667 & 6.0143 & 5.2000 & 4.5444  \\
Optimal Monte-Carlo & 40.8904 & 20.7979 & 14.0320 & 10.5985 & 8.4989 & 7.0659 & 6.0140 & 5.1999 & 4.5443  \\
BF             & 40.8920 & 20.7962 & 14.0337 & 10.6002 & 8.4995 & 7.0666 & 6.0153 & 5.1998 & 4.5445  \\
IF             & 40.8978 & 20.7960 & 14.0331 & 10.5991 & 8.5002 & 7.0667 & 6.0132 & 5.1998 & 4.5443  \\
CT             & 40.8961 & 20.8006 & 14.0333 & 10.5973 & 8.4986 & 7.0665 & 6.0137 & 5.2001 & 4.5444 \\
Simple ARQ     & 87.3286 & 31.1145 & 17.9077 & 12.3428 & 9.3310 & 7.4591 & 6.1846 & 5.2607 & 4.5568 
\end{tabular}
\end{table*}

\begin{table*}
\centering
\caption{Mean time to absorption for $R_1 = 10$, $R_0=5$, $\lambda=0.3$ and $E_d=10$ vs. $e$}
\label{son_vs_l}
\begin{tabular}{lllllllllll}
               & $e=1$  & $e=2$  & $e=3$  & $e=4$  & $e=5$  & $e=6$   & $e=7$   & $e=8$   & $e=9$     \\
Optimal analytical & 40.7000 & 21.7000 & 15.0333 & 11.7000 & 11.7000 & 8.3667 & 8.3667 & 8.3667 & 8.3667  \\
Optimal Monte-Carlo & 40.6956 & 21.6999 & 15.0320 & 11.6995 & 11.7009 & 8.3648 & 8.3651 & 8.3675 & 8.3670  \\
BF             & 40.7015 & 21.6987 & 15.0381 & 11.6986 & 11.6995 & 8.3677 & 8.3653 & 8.3667 & 8.3672  \\
IF             & 40.7023 & 21.7020 & 15.0308 & 11.7030 & 11.7010 & 8.3667 & 8.3658 & 8.3671 & 8.3654  \\
CT             & 40.6980 & 21.7006 & 15.0345 & 11.6995 & 11.6992 & 8.3674 & 8.3663 & 8.3657 & 8.3670 \\
Simple ARQ     & 47.7832 & 26.5340 & 19.1515 & 15.4839 & 14.0076 & 11.8479 & 10.8730 & 10.4191 & 10.2021
\end{tabular}
\end{table*}

\subsection{Correlated Channel}
In this section, we investigate the performance of the optimal policy presented in Algorithm \ref{alg2} for the case of correlated channel and compare its performance to the three baseline policies that employ HARQ mechanism as well as a simple ARQ mechanism. We also consider a randomized policy, which we call  \emph{Bernoulli} policy which harvests energy with probability, $p$, unless its battery state is less than one unit or it has accumulated sufficient mutual information during when it solely harvests energy. In the following, we study the effects of the encoding rate, the time correlation, and the EH rate.  Note that the mean time to absorption is determined by calculating $k_{b,m,0}$ and $k_{b,m,1}$ and then averaging them with respect to the steady-state distribution of the channel states, i.e., $k_{b,m} = \phi(0)k_{b,m,0} + \phi(0)k_{b,m,1}$, where $\phi(0) =1-\phi(1)= \frac{1-\lambda_1}{1+\lambda_0-\lambda_1}$.

\begin{remark}
Note that, in this section, we do not calculate the mean time to absorption by Algorithm \ref{alg2} (i.e., $k^{\boldsymbol{\pi}^*}_{b,m,0}$ and $k^{\boldsymbol{\pi}^*}_{b,m,1}$). Instead, we use the optimal scheduling decisions dictated by Algorithm \ref{alg2}  for each state $(b,m, G)$ to determine the mean time to absorption by Monte-Carlo simulations. This is because both methods yield the same mean time to absorption for the optimal policy and illustrating both on the same figure distinctly is not possible.
\end{remark}

To investigate the effect of the encoding rate on the mean time to absorption, we set the simulation parameters as $R_1 = 10$, $e=1$, $E_d=5$ and $p=0.1$. The mean time to absorption with respect to $R_0$, for negatively and positively correlated channel states, are depicted in Figures \ref{cor_VS_R1} and \ref{cor_VS_R2}, respectively. 
Unlike the i.i.d. case the knowledge of the channel state makes a significant difference in the performance of the proposed optimal policy as compared to the baseline policies. Hence, when the channel is correlated, a simple scheduling policy is not sufficient to achieve a low number of re-transmissions.

\begin{figure}
        \centering
        \begin{subfigure}[t]{0.4\textwidth}
                \includegraphics[width=\textwidth]{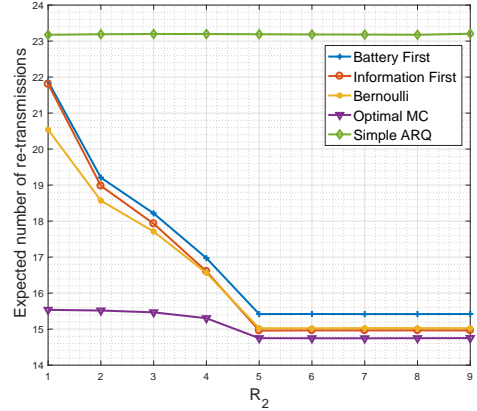}
                \caption{Negatively correlated channel, $\lambda_0 =0.7$ and $\lambda_1 =0.2$.}
                \label{cor_VS_R1}
        \end{subfigure}
        \hfill
        \begin{subfigure}[t]{0.4\textwidth}
                \includegraphics[width=\textwidth]{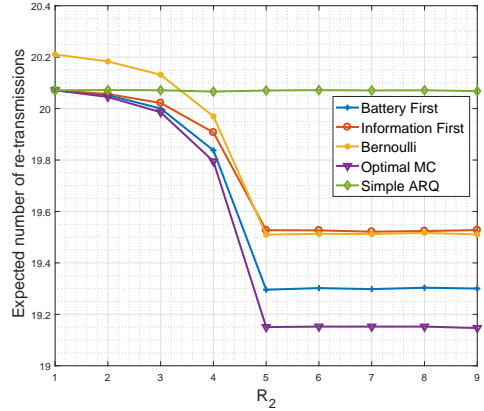}
                \caption{Positively correlated channel $\lambda_0 =0.2$ and $\lambda_1 =0.7$. }
                \label{cor_VS_R2}
        \end{subfigure}
				\caption{The effect of the encoding rate on the  minimum expected number of re-transmissions for $R_1 = 10$, $e=1$, $E_d=5$ and $p=0.1$.}
\label{cor_VS_R}
\end{figure}

Next, we study the effect of the channel quality and the correlation on the mean time to absorption. We set $R_1 = 10$, $e=1$, $E_d=5$ and $p=0.1$. We fix $\lambda_1=0.2$ and by varying $\lambda_0$, we calculate the mean time to absorption as illustrated in Figure \ref{cor_VS_lamda1}. Similarly, we fix $\lambda_0=0.2$ and by varying $\lambda_1$, we calculate the mean time to absorption by the aforementioned baseline policies  and illustrate the results in Figure \ref{cor_VS_lamda2}. Note that when the channel is negatively correlated, as in Figure \ref{cor_VS_lamda2}, the gap between the optimal policy and the baseline policies is high. However, when the channel is positively correlated, as in Figure \ref{cor_VS_lamda2}, the gap disappears as $\lambda_1$ increases. This is because, when the channel is positively correlated, the channel  tends to stay in the same state for a longer time before changing its state. On the contrary, in negatively correlated channel states, the channel is more likely to change its state at any time. This rapid change in state transition in the case of negatively correlated channel states requires a more adaptive policy rather than the case of the positively correlated channel state which rarely changes its state. Thus, the performance gain of Algorithm \ref{alg2} is more evident in negatively correlated channels.

\begin{figure}
        \centering
        \begin{subfigure}[t]{0.4\textwidth}
                \includegraphics[width=\textwidth]{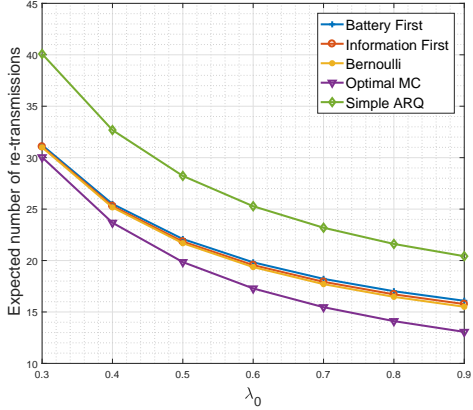}
                \caption{Negatively correlated channel, $\lambda_1 =0.2$.}
                \label{cor_VS_lamda1}
        \end{subfigure}
        \hfill
        \begin{subfigure}[t]{0.4\textwidth}
                \includegraphics[width=\textwidth]{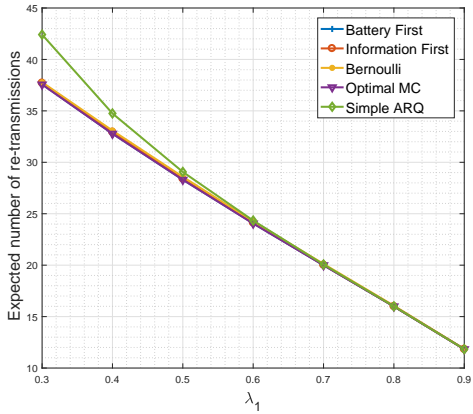}
                \caption{Positively correlated channel, $\lambda_0 =0.2$. }
                \label{cor_VS_lamda2}
        \end{subfigure}
				\caption{The effect of the channel quality and correlation on  the minimum expected number of re-transmissions for $R_1 = 10$, $R_0 = 3$, $e=1$, $E_d=5$ and $p=0.1$.}
\label{cor_VS_lamda}
\end{figure}

Finally the effect of EH rate, $e$, on the mean time to absorption for negatively and positively correlated channel states is depicted in Figure \ref{cor_VS_l1} and \ref{cor_VS_l2}, respectively. The results are obtained by setting $R_1 = 10$, $R_0=5$, $E_d=10$, $p=0.1$, $\lambda_0 =0.7$ and $\lambda_1 =0.2$ for negatively correlated channel states; and $\lambda_0 =0.2$ and $\lambda_1 =0.7$ for positively correlated channel states. We, again, observe that the optimal policy outperforms the baseline policies and the performance gain is more evident for negatively correlated channel states for the same reason we provided for the results in Figure \ref{cor_VS_lamda}.

It should be noted that when the channel states are correlated, the knowledge about the future channel states plays a major role in making decision about the power splitting ratio. On the contrary, when the channel states evolve i.i.d. over time, there exist a class of optimal policies instead of a single optimal policy.

\begin{figure}
        \centering
        \begin{subfigure}[t]{0.4\textwidth}
                \includegraphics[width=\textwidth]{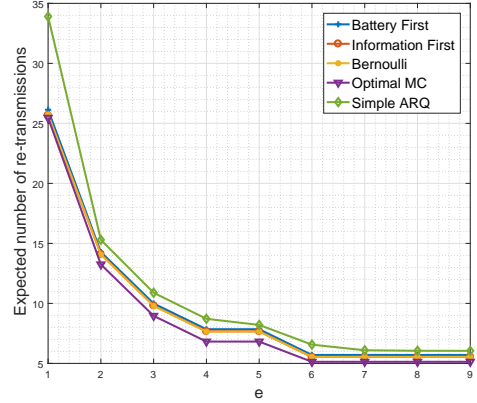}
                \caption{$\lambda_0 =0.7$ and $\lambda_1 =0.2$.}
                \label{cor_VS_l1}
        \end{subfigure}
        \hfill
        \begin{subfigure}[t]{0.4\textwidth}
                \includegraphics[width=\textwidth]{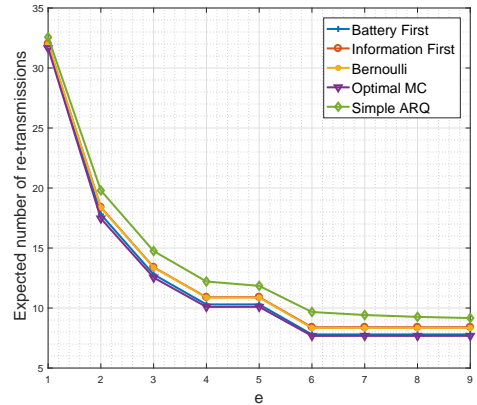}
                \caption{$\lambda_0 =0.2$ and $\lambda_1 =0.7$. }
                \label{cor_VS_l2}
        \end{subfigure}
				\caption{The effect of the EH rate on the minimum expected number of re-transmissions for $R_1 = 10$, $R_0=5$, $E_d=10$ and $p=0.1$.}
\label{cor_VS_l}
\end{figure}

\section{Conclusion}\label{Conclusions}
We analyzed a point-to-point wireless link employing HARQ for reliable transmission, where the receiver can only empower itself via the transmitter's RF signal. We modeled the problem of optimal power splitting using a Markovian framework, and developed an optimal algorithm achieving the minimum mean time to absorption for both time varying i.i.d. and correlated channels. We developed computationally inexpensive algorithms to calculate the minimum mean time to absorption and optimize the power splitting ratio starting at any arbitrary state. 

We proved that the optimal policy in case of i.i.d. channel states is not unique, and indeed the optimal policy belongs to the optimal family of policies. For correlated channel, we observed that it is only possible to achieve the optimal performance by intelligently utilizing the information offered by channel's correlation information. Finally, we numerically validated the analytical results established in the paper by providing extensive number of simulations.

It is worth mentioning that the two-state model, adopted here, is an approximation of a more general multi-state wireless channel. As a future work, we aim to extend this work for a more general setting where we will consider multi-rate information transmission, multi-state EH process, and non-linear EH efficiency. Due to  to analytical complexity, it is uncertain that the optimality result of no-split policy carries over to the more general setting. In this case, deep reinforcement learning techniques can be used as a promising approach to address the aforementioned extensions.



\begin{appendices}
\section{Proof of Lemma \ref{lemma1}}
\label{prooflemma1}
The proof is by induction. 
\begin{enumerate}
\item Base case: Let us consider the smallest possible value for $i$, i.e., $i=1$, such that $E_d-e\leq b < E_d$. Note that since $m=R_1$, the optimal decision is to use incoming RF signal only for harvesting energy, i.e., $\rho^*(b,\ R_1)=1$. Thus, we get
\small
\begin{align}
k_{b,R_1} = 1 + \lambda k_{b+e,R_1}+ (1-\lambda) k_{b,R_1}.
\end{align}
\normalsize
For $E_d-e\leq b < E_d$, if the channel is GOOD then the MC transitions into state $(b+e,\ R_1)$, which is an absorbing state,  so $k_{b+e,R_1}=0$. Hence, $k_{b,R_1}=\frac{1}{\lambda}$ and thus, the lemma holds for $i = 1$.
\item Induction step: assume that the lemma is true for some $i=n$, i.e., $k_{b,R_1}=n/\lambda$ for $E_d-n\cdot e\leq b<E_d-(n-1)\cdot e$.
\item Proof for case $i = n+1$: Let us calculate the mean time to absorption for the case $n+1$:
\small
\begin{align}
k_{b,R_1} =& 1 + \lambda k_{b+e,R_1}+ (1-\lambda)k_{b,R_1},\nonumber \\  &\hspace{2.5cm}\text{for}\ \ E_d-(n+1)e\leq b<E_d-nl,
\end{align}
\normalsize
which reduces to $k_{b,R_1} = \frac{n+1}{\lambda}$ for $E_d-(n+1)\cdot e\leq b<E_d-n\cdot e$.
\end{enumerate}
Thus, the lemma holds by induction.

\section{Proof of Theorem \ref{TH_dis}}
\label{proofTH_dis}
Assume that at time slot $t$ the system is at state $(b,\ m)$. Consider policy $\boldsymbol{\pi}^{split}$ which always chooses $0< \rho<1$. Hence, it follows that $R^H(\rho)< R_1$, $R^L(\rho)< R_0$ and, from (\ref{I}), we have $I(t)\leq R_1$. Also, it is easy to verify that for any $b$, we have $k_{b,m_1}\leq k_{b,m_2}$ whenever $m_1\geq m_2$. Thus, a lower bound on $k^{\boldsymbol{\pi}^{split}}_{b,m}$ in (\ref{k_pi}) can be established as,
\small
\begin{align}
k^{\boldsymbol{\pi}^{split}}_{b,m}\geq 1 + \lambda k_{b-1+\rho e,R_1} + (1-\lambda) k_{b-1,m+R_0}.\label{eq:bound}
\end{align}
\normalsize
Furthermore, since $b-1<b-1+\rho\cdot e<b-1+e$, from Lemma \ref{lemma1}, we know that $k_{b-1+\rho e,R_1}=k_{b-1,R_1}$. Hence, the lower bound in (\ref{eq:bound}) is exactly the same as $k^{\boldsymbol{\pi}^{no-split}}_{b,m}$ given in (\ref{kalpha}), i.e., $k^{\boldsymbol{\pi}^{no-split}}_{b,m}\leq k^{\boldsymbol{\pi}^{split}}_{b,m}$.

\section{Proof of Theorem \ref{lemma2}}
\label{prooflemma2}
The proof is by induction. For the base case consider the initial case when $j = 1$ so that $b= E_d+1, E_d+2, \ldots$ and $m = R_1-1$. We have 
\small
\begin{align}
k^{\boldsymbol{\pi}^0}_{E_d+1,R_1-1} =& 1 + \lambda k^{\boldsymbol{\pi}^*}_{E_d,R_1} + (1-\lambda)k^{\boldsymbol{\pi}^*}_{E_d,R_1}= 1,\\
k^{\boldsymbol{\pi}^1}_{E_d+1,R_1-1} =& \frac{1}{\lambda} + k^{\boldsymbol{\pi}^*}_{E_d+e+1,m}\nonumber\\
>&k^{\boldsymbol{\pi}^0}_{E_d+1,R_1-1}.
\end{align}
\normalsize
Note that when $b= E_d+1, E_d+2, \ldots$, by choosing $\rho =0$, regardless of the channel state, the next state, $(b-1,R_1)$, is an absorbing state so $k^{\boldsymbol{\pi}^0}_{b,R_1-1}=1$. Thus, the lemma holds for $j = 1$. In the induction step assume that the theorem holds for $j=n-1$, i.e., $k^{\boldsymbol{\pi}^*}_{b,R_1-n+1}=k^{\boldsymbol{\pi}^0}_{E_d+n-1,R_1-n+1}=\sum^{n-1}_{i=1} (1-\lambda)^{i-1}$ for $b= E_d+n-1, E_d+n, \ldots$. Now, we prove that the claim is also true for $j=n$.
\small
\begin{align}
k^{\boldsymbol{\pi}^0}_{E_d+n,R_1-n} =& 1 + (1-\lambda)k^{\boldsymbol{\pi}^*}_{E_d+n-1,R_1-n+1}\nonumber\\
=& 1 + \sum^{n-1}_{i=1} (1-\lambda)^{i}\nonumber\\
=& \sum^{n}_{i=1} (1-\lambda)^{i-1},\\
k^{\boldsymbol{\pi}^1}_{E_d+n,R_1-n}=& \frac{1}{\lambda} + k^{\boldsymbol{\pi}^*}_{E_d+n+e,R_1-n}\nonumber\\
>&\frac{1}{\lambda} + k^{\boldsymbol{\pi}^*}_{E_d+n+e,R_1-n+1}\nonumber\\
=& \frac{1}{\lambda} + k^{\boldsymbol{\pi}^*}_{E_d+n-1,R_1-n+1}\nonumber\\
=&\frac{1}{\lambda} + \frac{1-(1-\lambda)^{n-1}}{\lambda}
\end{align}
\normalsize
Furthermore,
\small
\begin{align}
k^{\boldsymbol{\pi}^0}_{E_d+n,R_1-n} =&\frac{1-(1-\lambda)^n}{\lambda}\nonumber\\
=& 1 + (1-\lambda)\frac{1-(1-\lambda)^{n-1}}{\lambda}<k^{\boldsymbol{\pi}^1}_{E_d+n,R_1-n}
\end{align}
\normalsize
For the last part of the proof, we need to show that $k^{\boldsymbol{\pi}^*}_{b,R_1-n} = k^{\boldsymbol{\pi}^0}_{E_d+n,R_1-n}$ for $b= E_d+n+1, E_d+n+2, \ldots$. We may write:
\small
\begin{align}
k^{\boldsymbol{\pi}^*}_{b,R_1-n} &= 1 + (1-\lambda)k^{\boldsymbol{\pi}^0}_{b-1,R_1-n+1}\nonumber\\
&= 1 + (1-\lambda)k^{\boldsymbol{\pi}^0}_{E_d+n-1,R_1-n+1} = k^{\boldsymbol{\pi}^0}_{E_d+n,R_1-n}
\end{align}
\normalsize

\section{ The Optimality of Algorithm \ref{alg1}}
\label{proofalg1}
In Lemma \ref{lemma1}, we characterized the minimum mean time to absorption for all states $(b,\ R_1)$, for $b=0,\ldots,E_d-1$. Also, in Theorem \ref{lemma2}, we characterized the minimum mean time to absorption for states, $(b,\ R_1-j)$ where, $b=E_d+j,E_d+j+1,\ldots$ and $j = 1,\ldots,R_1$. Furthermore, Theorem \ref{TH_dis} proves that at any state $(b,\ m)$, the receiver should either choose to harvest energy or accumulate mutual information. Note that the iterations are ordered in Algorithm \ref{alg1} (line 4-8) so that $k^{\boldsymbol{\pi}^0}_{b,m}$ and $k^{\boldsymbol{\pi}^1}_{b,m}$ only depend on $k^{\boldsymbol{\pi}^*}_{b-1,R_1}$,  $k^{\boldsymbol{\pi}^*}_{b-1,m+1}$, and $k^{\boldsymbol{\pi}^*}_{b+1,m}$ which are obtained at the previous rounds of the algorithm.

 \section{The Optimality of $\rho=0$ when $b>E_d$}
 \label{proofTH_str1}
 Due to space limitations, we only provide a sketch of the proof with complete details given in \cite{arxivmehdi}. We need to show that $k^{\boldsymbol{\pi}^0}_{E_d+j-i,R_1-j}<k^{\boldsymbol{\pi}^1}_{E_d+j-i,R_1-j}$ for all $j = 1,\ldots,R_1$ and $i=0,1,\ldots,j-1$. The proof is by induction. For the base case, we need to show that the theorem holds for $i=0$ and all $j = 1,\ldots,R_1$, i.e., $k^{\boldsymbol{\pi}^0}_{E_d+j,R_1-j}<k^{\boldsymbol{\pi}^1}_{E_d+j,R_1-j}$, which is an immediate result of Theorem \ref{lemma2}. Next, in the induction step, assume that the theorem is true for $i=n$ and all $j = 1,\ldots,R_1$ i.e., $k^{\boldsymbol{\pi}^0}_{E_d+j-n,R_1-j}<k^{\boldsymbol{\pi}^1}_{E_d+j-n,R_1-j}$. Then, using (\ref{k0}) and (\ref{k1}), it is possible to show that:
\small
\begin{align}
&k^{\boldsymbol{\pi}^1}_{E_d+j-(n+1),R_1-j}=\frac{1}{\lambda} + 1+(1-\lambda)k^{\boldsymbol{\pi}^*}_{E_d+(j-1)-n,R_1-(j-1)},\label{k1-TH_str1}\\
&k^{\boldsymbol{\pi}^0}_{E_d+j-(n+1),R_1-j}\leq \frac{1}{\lambda} +(1-\lambda)k^{\boldsymbol{\pi}^*}_{E_d+(j-1)-n,R_1-(j-1)},
\end{align} 
\normalsize
which results in $k^{\boldsymbol{\pi}^0}_{E_d+j-(n+1),R_1-j} < k^{\boldsymbol{\pi}^1}_{E_d+j-(n+1),R_1-j}$, proving that the statement also holds for $i=n+1$, and all $j = 1,\ldots,R_1$.
 \section{ $k^{\boldsymbol{\pi}^0}_{b,m} = k^{\boldsymbol{\pi}^1}_{b,m}$ for $1\leq b\leq E_d$, $0\leq m\leq R_1-1$}
 \label{proofTH_str2}
 Due to space limitations, we only provide a sketch of the proof with complete details given in \cite{arxivmehdi}. We have to show that $k^{\boldsymbol{\pi}^0}_{i,R_1-j} = k^{\boldsymbol{\pi}^1}_{i,R_1-j}$ for $i=1,\ldots,E_d$ and $j = 1,\ldots,R_1$. The outline of the induction proof is as follows:
\begin{itemize}
\item For the base case we show that $k^{\boldsymbol{\pi}^0}_{i,R_1-1} = k^{\boldsymbol{\pi}^1}_{i,R_1-1}$ for all $i=1,\ldots,E_d$. It is easy to verify that $k^{\boldsymbol{\pi}^0}_{E_d,R_1-1} = k^{\boldsymbol{\pi}^1}_{E_d,R_1-1}$. By assuming that $k^{\boldsymbol{\pi}^0}_{i,R_1-1} = k^{\boldsymbol{\pi}^1}_{i,R_1-1}$, from (\ref{k0}) and (\ref{k1}) , one can calculate:
\begin{align}
k^{\boldsymbol{\pi}^1}_{i-1,R_1-1}=k^{\boldsymbol{\pi}^0}_{i-1,R_1-1}=1+\frac{E_d-i+2}{\lambda}.
\end{align}

\item In the induction step, we assume that the theorem is true for $j=n$ and all $i=1,\ldots,E_d$.
\item Using the induction step, (\ref{k0}), and (\ref{k1}), we obtain the following result for the case $n+1$:
\begin{align}
&k^{\boldsymbol{\pi}^0}_{i-1,R_1-(n+1)}\nonumber\\
&=k^{\boldsymbol{\pi}^1}_{i-1,R_1-(n+1)}=\frac{1}{\lambda} + E_d-i+2 + (1-\lambda) k^{\boldsymbol{\pi}^*}_{i-1,R_1-n}.
\end{align}
\end{itemize}
Hence, the theorem holds for $j = n+1$ and all $i=1,\ldots,E_d$. Therefore, the statement is true by induction.

\section{Proof of Theorem \ref{TH_str}}
\label{proofTH_str}
The proof of the theorem is straightforward and proceeds as follows:
\begin{enumerate}
\item When $b=0$, the receiver has no energy to activate the RF transceiver and should first recharge its battery. When $m=R_1$, the receiver collected sufficient mutual information to decode, but needs energy to perform the decoding operation.  Hence, it harvests energy.
\item This part of the theorem is proven in Appendix \ref{proofTH_str1}.
\item In Appendix \ref{proofTH_str2} we show that whenever $b=1,2,\ldots, E_d$, and $m=0,1,\ldots R_1-1$, then $k^{\boldsymbol{\pi}^0}_{b,m} = k^{\boldsymbol{\pi}^1}_{b,m}$. Consider a policy $\beta$ which satisfies part 1 and 2 of the theorem. Whenever $b=1,2,\ldots, E_d$ and $m=0,1,\ldots R_1-1$, the policy chooses $\rho=0$ with probability $p$. The mean time to absorption of policy $\beta$, $k^{\beta}_{b,m}$ can be calculated as follows
\end{enumerate}
\begin{align}
k^{\beta}_{b,m} = p k^{\boldsymbol{\pi}^0}_{b,m} + (1-p) k^{\boldsymbol{\pi}^1}_{b,m} =  k^{\boldsymbol{\pi}^0}_{b,m}=k^{\boldsymbol{\pi}^1}_{b,m}.
\end{align}
\end{appendices}
\bibliographystyle{IEEEtran}
\bibliography{ref.bib}

\begin{IEEEbiography}[{\includegraphics[width=1in,height=1.25in,clip,keepaspectratio]{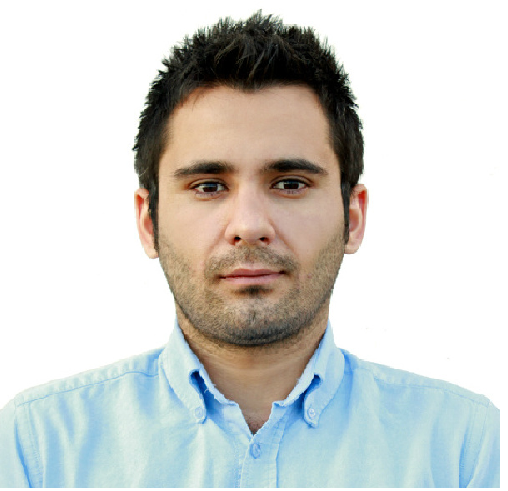}}]{Mehdi Salehi Heydar Abad}
received the B.S. degree in electrical engineering from the IUST, Tehran, Iran, in 2012, and the M.S. degree in electrical and electronics engineering from the Sabanci University, Istanbul, Turkey in 2015. Currently he is a PhD candidate at Sabanci University. He was also a Visiting Researcher at The Ohio State University, Columbus, OH, USA. His research interests are in the field of mathematical modeling of communication systems, stochastic optimization, and green communication networks. He is the recipient of the Best Paper Award at the 2016 IEEE Wireless Communications and Networking Conference (WCNC).
\end{IEEEbiography}

\begin{IEEEbiography}[{\includegraphics[width=1in,height=1.25in,clip,keepaspectratio]{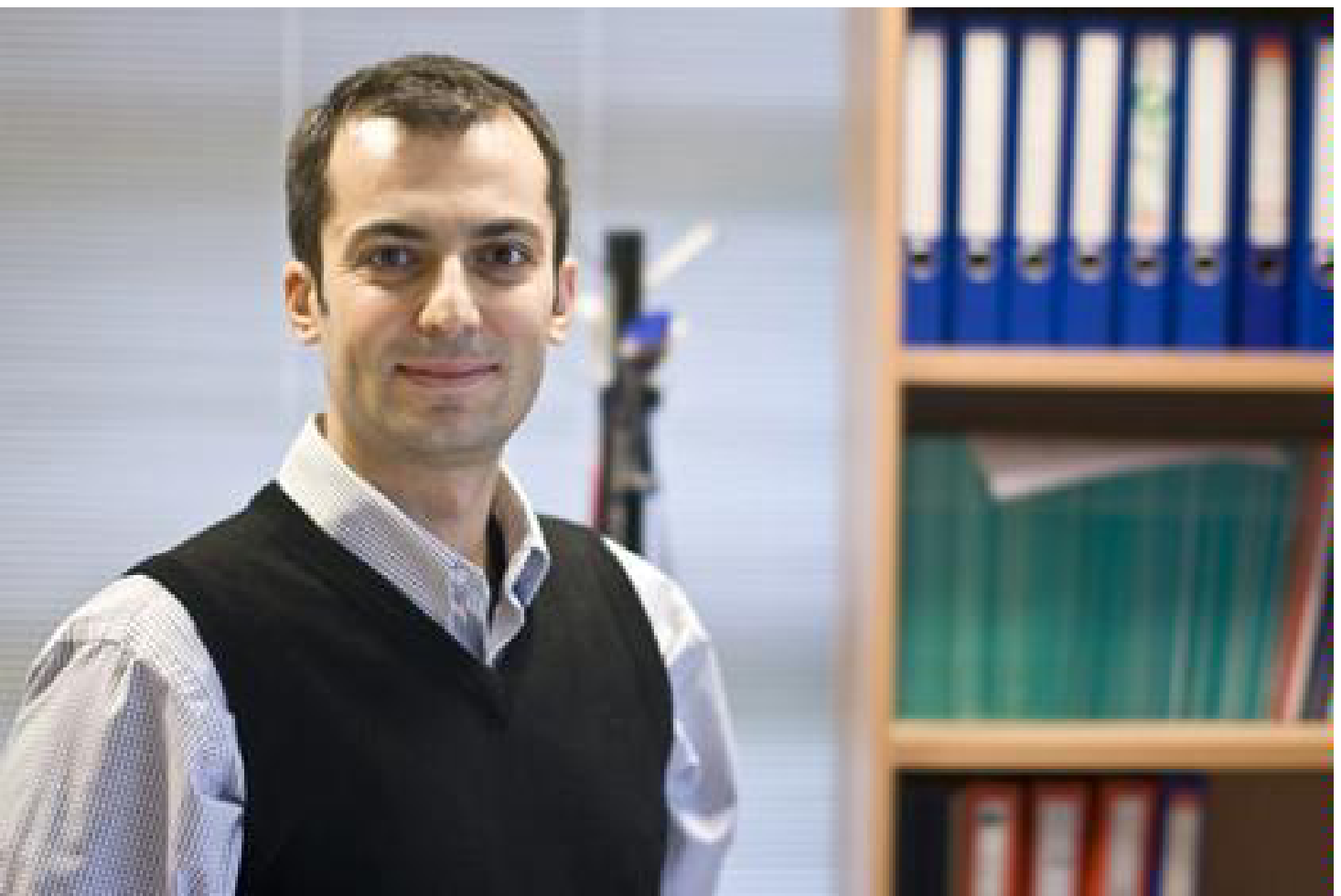}}]{Ozgur Ercetin}
received the B.S. degree in electrical and electronics engineering from the Middle East Technical University, Ankara, Turkey, in 1995, and the M.S. and Ph.D. degrees in electrical engineering from the University of Maryland, College Park, MD, USA, in 1998 and 2002, respectively. Since 2002, he has been with the Faculty of Engineering and Natural Sciences, Sabanci University, Istanbul, Turkey. He was also a Visiting Researcher with HRL Labs, Malibu, CA, USA; Docomo USA Labs, Palo Alto, CA, USA; The Ohio State University, Columbus, OH, USA; Carleton University, Ottawa, CA and Université du Québec à Montréal, Montreal CA. His research interests are in the field of computer and communication networks with emphasis on fundamental mathematical models, architectures and protocols of wireless systems, and stochastic optimization.
\end{IEEEbiography}

\begin{IEEEbiography}[{\includegraphics[width=1in,height=1.25in,clip,keepaspectratio]{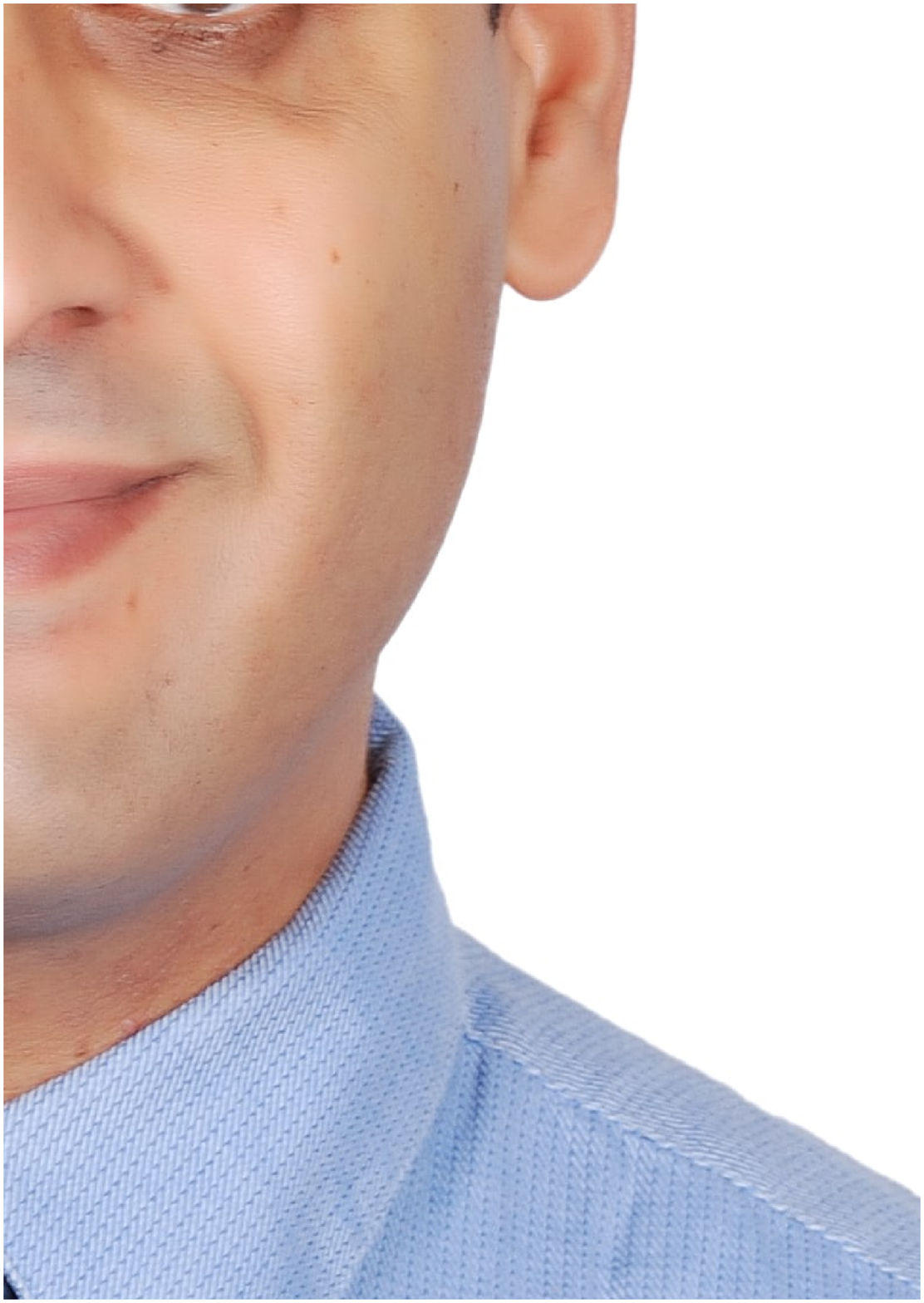}}]{Tamer Elbatt}
[S’98-M’01-SM’06] received the B.S. and M.S. degrees in EECE from Cairo University, Egypt in 1993 and 1996, respectively, and the Ph.D. degree in Electrical and Computer Engineering from the University of Maryland, College Park, MD, USA in 2000. From 2000 to 2009 he was with major U.S. industry R\&D labs, e.g., HRL Laboratories, LLC, Malibu, CA, USA and Lockheed Martin ATC, Palo Alto, CA, USA, at various positions. From 2009 to 2017, he served at the EECE Dept., Faculty of Engineering, Cairo University, as an Assistant Professor and later as an Associate Professor, currently on leave. He also held a joint appointment with Nile University, Egypt from 2009 to 2017 and has served as the Director of the Wireless Intelligent Networks Center (WINC) from 2012 to 2017. In July 2017, he joined the Dept. of CSE at the American University in Cairo as an Associate Professor. Dr. ElBatt research has been supported by the U.S. DARPA, ITIDA, QNRF, EU FP7, H2020, General Motors, Microsoft, Google and Vodafone Egypt Foundation and is currently being supported by Egypt NTRA. He has published more than 110 papers in prestigious journals and international conferences. Dr. ElBatt holds seven issued U.S. patents and one WIPO patent.
Dr. ElBatt is a Senior Member of the IEEE and has served on the TPC of numerous IEEE and ACM conferences. He served as the Demos Co-Chair of ACM Mobicom 2013 and the Publications Co-Chair of IEEE Globecom 2012 and EAI Mobiquitous 2014. Dr. ElBatt currently serves on the Editorial Board of IEEE Transactions on Cognitive Communications and Networking and Wiley International Journal of Satellite Communications and Networking and has previously served on the Editorial Board of IEEE Transactions on Mobile Computing. Dr. ElBatt has also served on the United States NSF and Fulbright review panels. Dr. ElBatt was a Visiting Professor at the Dept. of Electronics, Politecnico di Torino, Italy in Aug. 2010, Faculty of Engineering and Natural Sciences, Sabanci University, Turkey in Aug. 2013 and the Dept. of Information Engineering, University of Padova, Italy in Aug. 2015. Dr. ElBatt is the recipient of the prestigious Google Faculty Research Award in 2011, the 2012 Cairo University Incentive Award in Engineering Sciences and the 2014 Egypt’s State Incentive Award in Engineering Sciences. His research interests lie in the broad areas of performance analysis, design and optimization of wireless and mobile networks.
\end{IEEEbiography}

\begin{IEEEbiography}[{\includegraphics[width=1in,height=1.25in,clip,keepaspectratio]{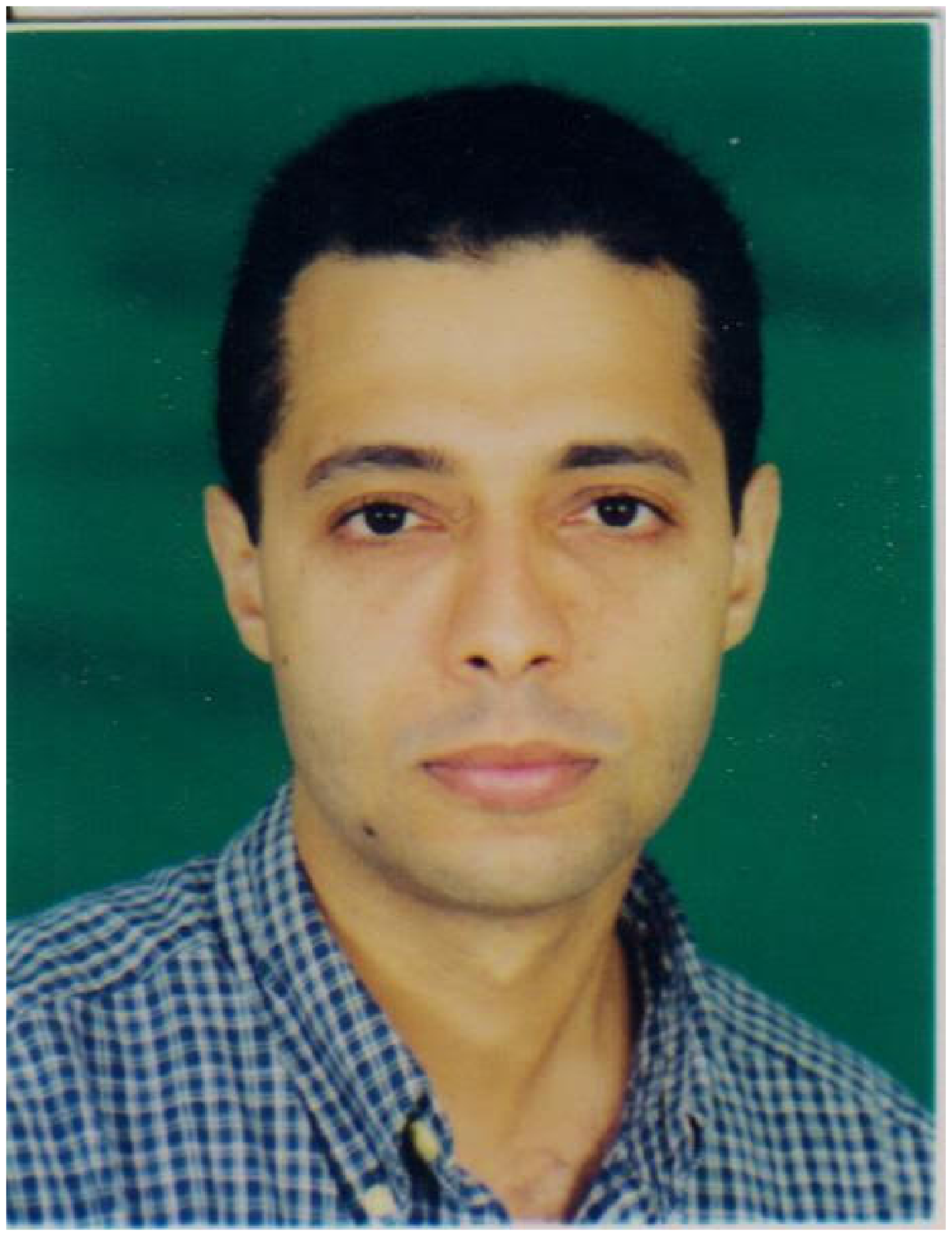}}]{Mohammed Nafie}
[S’03-M’08-SM’13] received his B.Sc. degree with honors from Cairo University in 1991. He received his M.Sc. degree with Distinction from King's College, University of London in 1993. He received his Ph.D. degree from University of Minnesota in 1999 where his research interests were in the area of low complexity detection algorithms. After his graduation he joined the Wireless Communications lab of the DSPS Research and Development Center of Texas Instruments. He was a primary contributor in TI's effort in the IEEE wireless personal area networks standardization process. In 2001, Dr. Nafie joined Ellipsis Digital Systems, where he was responsible for the development of the Physical Layer (PHY) of 802.11b wireless LAN standard. He has also participated in designing and optimizing the PHY of the 802.11a standard. In 2002, Dr. Nafie co-founded SySDSoft, a company specializing in the design of wireless digital communication systems. He was the Chief Technology Officer of SySDSoft till December 2006. Dr. Nafie is currently a Professor at the Communications and Electronic Department of Cairo University, and at the Wireless Intelligent Networks Center of Nile University, for which he was a director from August 2009 to August 2012. His research interests are in the area of wireless and wireline digital communications and digital signal processing, and include equalization, LDPC coding, MIMO, relaying systems and caching among others. Dr. Nafie has published numerous Journal and Conference papers, and has nineteen issued US patents as well as few others pending. He has several patents in Europe and Japan as well.
\end{IEEEbiography}
\end{document}